%%%%%%%%%%%%%%%%%%%%%%%%%%%%%%%%%%%%%%%%%%%%%%%%%%
%%%
%%% File name:  CohEnt.tex
%%%
%%% 5.5+4 pages, comments and suggestions are very welcome!
%%%
%%%
%%% Submitted to PRL on 17 Mar 2016
%%%
%%% 1 figure, no tables
%%%
%%%
%%%%%%%%%%%%%%%%%%%%%%%%%%%%%%%%%%%%%%%%%%%%%%%%%%%%%%%%%%%%%%%%%%%%%%%%%%%%%%%%%%%%%%%%

\documentclass[aps,prl,showpacs,twocolumn,nofootinbib,10pt, superscriptaddress]{revtex4-2}
% \pdfoutput=1

%%\RequirePackage[l2tabu,orthodox]{nag}% Old habits die hard. All the same, there are
%%                                     % commands, classes and packages which are
%%                                     % outdated and superseded. nag provides routines
%%                                     % to warn the user about the use of those.
%%
%%\usepackage[all,error]{onlyamsmath}

\usepackage{graphicx}
\usepackage{float}
\usepackage{epstopdf}
\usepackage{amsmath}
\usepackage{amssymb}
\usepackage{mathrsfs}
\usepackage{amsthm}
\usepackage{bm}
\usepackage{url}
\usepackage[T1]{fontenc}
\usepackage{csquotes}
\usepackage{qcircuit}
\usepackage{xcolor}
\usepackage[utf8]{inputenc}
\MakeOuterQuote{"}
%\usepackage{authblk}
%\usepackage{cite}

%%%%%%%%%%%%%%%%%%%%%%%%%%%%%%%%%%%%%%%%%%%%%%%%%%%%%%%%%%%%%%%%%%%%%%%%%%%%%%%%%%%%%%%%%%

\newtheoremstyle{note}
  {\topsep/2}               % ABOVE SPACE
  {\topsep/2}               % BELOW SPACE
  {}                      % BODY FONT
  {\parindent}            % INDENT (empty value is the same as 0pt)
  {\itshape}              % HEAD FONT
  {.}                     % HEAD PUNCTUATION
  {5pt plus 1pt minus 1pt}% HEAD SPACE
  {}

\theoremstyle{note}
\newtheorem{theorem}{Theorem}

\newtheorem{proposition}{Proposition}

\theoremstyle{definition}

\theoremstyle{remark}

%%%%%%%%%%%%%%%%%%%%%%%%%%%%%%%%%%%%%%%%%%%%%%%%%%%%%%%%%%%%%%%%%%%%%%%%%%%%%%%%%%%%%%%%%%%%%%%%%%%%%%%%%%%%%%%%%%%%%%%%%%%%%%%%%%%%%%%%%%%%%

%% abbreviation in this paper

\newcommand{\scf}{\mathscr{F}_\mrm{sc}}
\newcommand{\scfp}{\mathscr{F}_\mrm{sc}^+}

%%%%%%%%%%%%%%%%%%%%%%%%%%%%%%%%%%%%%%%%%%%%%%%%%%%%%%%%%%%%%%%%%%%%%%%%%%%%%%%%%%%%%%%%%%%%%%%%%%%%%%%%%%%%%%%%%%%%%%%%%%%%%%%%%%%%%%%%%%%%%%%
%% define mathematical words via abbreviations.

 %% overiding the original command

\newcommand{\mrm}[1]{\mathrm{#1}}

%\newcommand{\gcd}{\operatorname{gcd}} %% already defined
%\newcommand{\lcm}{\operatorname{lcm}}

%\newcommand{\nfrac}[2]{\genfrac{}{}{0pt}{}{#1}{#2}}

%%%%%%%%%%%%%%%%%%%%%%%%%%%%%%%%%%%%%%%%%%%%%%%%%%%%%%%%%%%%%%%%%%%%%%%%%%%%%%%%%%%%%%%%%%%%%%%%%%%%%%%%%%%%%%%%%%%%%%%%%%%%%%%%%%%%%%%%%%%
%% Letters with special fonts

\newcommand{\rmT}{\mathrm{T}}

\newcommand{\caH}{\mathcal{H}}

\newcommand{\gme}{\mathrm{GME}}

%%%%%%%%%%%%%%%%%%%%%%%%%%%%%%%%%%%%%%%%%%%%%%%%%%%%%%%%%%%%%%%%%%%%%%%%%%%%%%%%

\newcommand{\be}{\begin{equation}}
\newcommand{\ee}{\end{equation}}
\newcommand{\ba}{\begin{align}}
\newcommand{\ea}{\end{align}}

%%%%%%%%%%%%%%%%%%%%%%%%%%%%%%%%%%%%%%%%%%%%%%%%%%%%%%%%%%%%%%%%%%%%%%%%%%%%%%%%%
\def\<{\langle}  %% overiding the original command \<
\def\>{\rangle}  %% overiding the original command \>
\newcommand{\ket}[1]{| #1\>}

       %% overiding the original command \outer

%%%%%%%%%%%%%%%%%%%%%%%%%%%%%%%%%%%%%%%%%%%%%%%%%%%%%%%%%%%%%%%%%%%%%%%%%%%%%%%%%%%%%%%%%%%%%
%% Abbreviations used in cross references and citations

%\def\eqref#1{\textup{(\ref{#1})}}  %% overiding the original command \eqref

\newcommand{\cref}[1]{Conjecture~\ref{#1}}
\newcommand{\Cref}[1]{Conjecture~\ref{#1}}

\makeatletter

\newcommand{\Rmnum}[1]{\expandafter\@slowromancap\romannumeral #1@}
\makeatother

%%%%%%%%%%%%%%%%%%%%%%%%%%%%%%%%%%%%%%%%%%%%%%%%%%%%%%%%%%%%%%%%%%%%%%%%%%%
\begin{document}
\title{Quantum Coherence: A Fundamental Resource for Establishing Genuine Multipartite Correlations}
\author{Zong Wang$^{\S}$}
\affiliation{School of Mathematical Sciences, MOE-LSC, Shanghai Jiao Tong University, Shanghai 200240,
China}
\affiliation{Shanghai Seres Information Technology Co., Ltd, Shanghai 200040, China}
\affiliation{Shenzhen Institute for Quantum Science and Engineering, Southern University of Science and Technology, Shenzhen 518055, China}
\email{These authors contribute equally to this work}
\author{Zhihua Guo$^{\S}$}
\affiliation{School of Mathematics and Statistics, Shaanxi Normal University, Xi'an 710119, China}
\email{These authors contribute equally to this work}
\author{Zhihua Chen$^{\S}$}
\affiliation{School of Science, Jimei University,  Xiamen 361021,
China}
\email{These authors contribute equally to this work}
\author{Ming Li}
\affiliation{College of the Science, China  University of Petroleum, Qingdao 266580, China}
\author{Zihang Zhou}
\affiliation{State Key Laboratory for Mesoscopic Physics, School of Physics,
Frontiers Science Center for Nano-optoelectronics, Peking University, Beijing 100871, China}
\author{Chengjie Zhang}
\email{zhangchengjie@nbu.edu.cn}
\affiliation{School of Physical Science and Technology, Ningbo University, Ningbo 315211, China}
\author{Shaoming Fei}
\email{smfei@mis.mpg.de}
\affiliation{School of Mathematical Sciences, Capital Normal University, Beijing 100048, China}
\affiliation{Max-Planck-Institute for Mathematics in the Science, Leipzig 04113, Germany}
\author{Zhihao Ma}
\email{mazhihaoquantum@126.com}
\affiliation{School of Mathematical Sciences, MOE-LSC, Shanghai Jiao Tong University, Shanghai 200240,
China}
\affiliation{Shanghai Seres Information Technology Co., Ltd, Shanghai 200040, China}
\affiliation{Shenzhen Institute for Quantum Science and Engineering, Southern University of Science and Technology, Shenzhen 518055, China}
\date{\today}
\begin{abstract}
We establish the profound equivalence between measures of genuine multipartite entanglement (GME) and their corresponding coherence measures. Initially we construct two distinct classes of measures for genuine multipartite entanglement utilizing real symmetric concave functions and the convex roof technique. We then demonstrate that all coherence measures for any qudit states, defined through the convex roof approach, are identical to our two classes of GME measures of the states combined with an incoherent ancilla under a unitary incoherent operation. This relationship implies that genuine multipartite entanglement can be generated from the coherence inherent in an initial state through the unitary incoherent operations.  Furthermore, we explore the interplay between coherence and other forms of genuine quantum correlations, specifically genuine multipartite steering and genuine multipartite nonlocality. In the instance of special three-qubit X-states (only nonzero elements of X-state are diagonal or antidiagonal when written in an orthonormal basis), we find that genuine multipartite steering and nonlocality are present if and only if  the coherence exists in the corresponding qubit states.
\end{abstract}
\maketitle
\section{Introduction}
Genuine multipartite entanglement (GME) plays a key role in many quantum information tasks, encompassing quantum teleportation \cite{RPMK,CG},quantum metrology \cite{Actmetro}, quantum key distribution \cite{RPMK}, quantum cryptography  \cite{RPMK,AK} and measurement-based quantum computing \cite{RPMK,RH}. Moreover, It also serves as a powerful tool for investigating critical physical phenomena such as black hole information paradox and quantum phase transition. As a cornerstone resource in the realm of quantum information, the detection of quantification of GME is of paramount importance. Many different GME measures have been proposed \cite{Geoent, YJ,SJ, Walter, Dai, Guo20, Schneeloch, Guo22, ZH, convexroofmea}. Despite the growing number of the subsystem, the detection of entanglement and practical computation of GME measures remain a significant and challenging endeavor in quantum information science \cite{Jungnitsch, GV, Huber, Qian18, Luo20, Yang22, Eltschka, Friis, DLXD}.

Besides the genuine multipartite entanglement, genuine multipartite non-locality (GMNL) and genuine multipartite steering (GMS), are all fundamental and important non-classical features of multipartite systems. Genuine multipartite non-locality have garnered substantial interest from the scientific community \cite{GS,DNSD,MG,JNNY}. The applications of GMNL spans a variety of quantum tasks, including the establishment of secure cryptographic protocols \cite{FGH,FGHD}, the reduction of communication complexity in information transmission \cite{HRSR} and the generation of device-independent random numbers \cite{SASA}. Despite the inherent challenges in detecting and characterizing GMNL, the development of multipartite Bell-type inequalities has provided valuable tools for its verification \cite{MG,QSCC,MAIT}. Genuine multipartite steering (GMS), first introduced in \cite{QM},  holds profound significance in various quantum information processes, including quantum teleportation \cite{M}, quantum secret sharing  \cite{IYQG} and quantum subchannel discrimination  \cite{KXY}. Unlike GME and GMNL, GMS is not symmetric intrinsically, which complicates the detection and quantification of this correlation \cite{DPG,KXJ,JFMN,LSL, ZRS}.

Quantum coherence, rooted in the superposition principle, is the cornerstone of quantum mechanics. Unlike classical state (that is, incoherent state), a quantum state (e.g.,Schr$\ddot{o}$dinger's cat) is in a superposition of different states(the cat is both dead and alive at the same time), which leads to the quantumness of the state, and brings quantum advantages. So it is not surprising that quantum coherence can be served as a vital physical resource across a wide range of fields, including 
%quantum biology \cite{LS,CNY,NL}, quantum reference frames \cite{SDT,IR,IRW},  
%quantum thermodynamics \cite{MJ,JA,PMJ}, 
quantum metrology \cite{VSL,BRL,IRWS}, and quantum computation et. al \cite{PW,MH,JBD,JDN}. A key issue when studying quantum coherence is how to quantify quantum coherence. With the development of resource theory, a number of coherence measures have been put forward \cite{BCP,SAP17,HU, XianShi, Speedlim, SD}. Actually, many of which draw inspiration from results of entanglement theory, such as coherence cost(also called coherence of formation) \cite{XHZ,ADY}, robustness of coherence \cite{CTM} and distillable coherence (also called the relative entropy of coherence) \cite{BCP,ADY}. This has led to a surge of interest in the interconversion between entanglement and coherence under particular conditions \cite{JJH,WJ,AUH,SC,NFM,XTF,Kim}. It has been demonstrated that any degree of coherence in a given reference basis can be transformed into bipartite entanglement through incoherent operations, as shown in \cite{JJH, AUH, Kim}. Furthermore, a comprehensive operational one-to-one correspondence between coherence and entanglement measures for bipartite quantum systems has been established \cite{HZ}.

In contrast to the well-explored relationship between quantum entanglement and coherence, the complex interconnections between quantum coherence and other types of quantum correlations have not been thoroughly examined.  The correlations between quantum coherence and quantum nonlocality, as well as genuine quantum entanglement and genuine quantum nonlocality, have been explored in \cite{Regula,XIYA}. The emergence of quantum discord via multipartite, incoherent operations is intrinsically bounded by the consumption of quantum coherence within the subsystems \cite{JBD}. The dynamics between quantum coherence and additional convex resources, including quantum Fisher information and the so-called "magic" resources, have been clarified \cite{TANKC, Mukhopadhyay}. Nevertheless, the investigation into the interplay between quantum coherence and genuine multipartite quantum correlations is still quite limited, and as a result, our comprehension of this intricate relationship remains largely incomplete.
\begin{figure}
	\centering
	\includegraphics[width=0.7\linewidth]{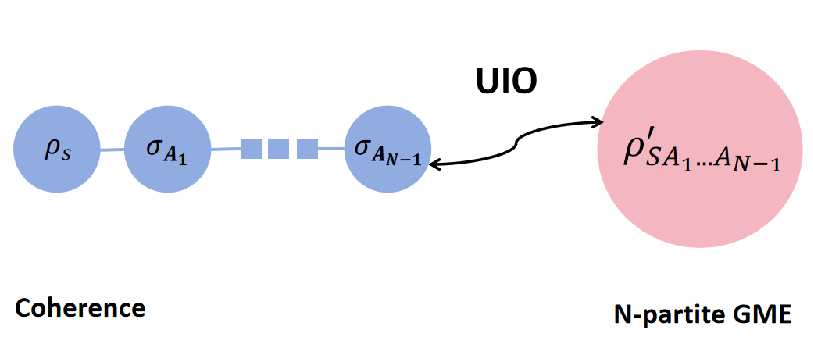}
	\caption{ Connection between GME and coherence. All coherence measures defined by the real symmetric concave functions for pure states with extension to mixed states by convex roof are equivalent to the GME measures of the $N$-partite states $\rho_{SA_{1}...A_{N-1}}$ generated by a unitary incoherent operation (UIO) on the initial states $\rho_{S}$ and $N-1$ incoherent ancilla states $\rho_{A_{1}},...,\rho_{A_{N-1}}$.}
	\label{Fig.1}
\end{figure}

In this study, we develop two distinct measures of genuine multipartite entanglement (GME) and establish a link between GME measures and coherence measures. We demonstrate that all coherence measures defined by the symmetric concave functions for pure states, and generalized to mixed states through the convex roof construction, are equivalent to our two corresponding GME measures for the states produced by a unitary incoherent operation performed on the initial states with an incoherent ancilla, as depicted in FIG. \ref{Fig.1}.
Moreover, we explore the operational connections between coherence and other forms of quantum correlations, such as genuine multipartite steering (GMS) and genuine multipartite nonlocality (GMNL).
Our results enhance the comprehension of the interconversion processes between genuine multipartite quantum correlations and quantum coherence. The established linkage allows for the extrapolation of numerous findings from the genuine multipartite correlations to the field of quantum coherence, and vice versa, thereby broadening the potential applicability of these concepts.

The structure of this paper is as follows. Section \Rmnum{1} provides an introduction. In Section \Rmnum{2}, we introduce two types of measures for genuine multipartite entanglement (GME) and offer a comparative analysis of these measures. Section \Rmnum{3} is dedicated to establishing an operational link between GME and coherence monotones constructed using the convex roof method. In Section \Rmnum{4}, we delineate the operational connections between coherence and other forms of genuine quantum correlations. Finally, Section \Rmnum{5} summarizes the key findings and contributions of this paper.

\section{Quantum Coherence Monotones And Genuine Multipartite Entanglement Monotones}

Consider the $d$-dimensional Hilbert space $\caH^d$ with an orthonormal basis $\{|j\>\}_{j=1}^{d}$.
The coherence vector of a pure state
$\ket{\psi}=\sum_{i=0}^{d-1}c_{i}\ket{i}$ in $\caH^d$
is defined as \cite{HZ,SD}:
\begin{equation}
\mu(\ket{\psi})= (|c_{0}|^{2}, |c_{1}|^{2},..., |c_{d-1}|^{2})^{\rmT}.
\end{equation}
Denote by $\scf$ the set of symmetric concave functions on the probability simplex.
According to \cite{HZ,SD}, quantum coherence monotones are in one-to-one correspondence with symmetric concave functions on the probability simplex.
Let $\scfp$ be the subset of functions $f$ that satisfy the additional property: $f(p)=0$ if and only if $\max p_j=1$. Note that this condition automatically guarantees that $f(p)$ is nonnegative.
Given any $f\in \scf$, a coherence monotone for a pure state $|\psi\>$ can be constructed as follows \cite{HZ,SD},
\begin{equation}
C_f(\ket{\psi})= f(\mu(\ket{\psi})).
\end{equation}
The monotone for the mixed state $\rho$ is  defined by the convex roof,
\begin{equation}\label {eq:Cfmix}
C_f(\rho)=\min_{\{p_j, |\psi_j\rangle\}} \sum_j p_j C_f(|\psi_j\rangle),
\end{equation}
where the minimum is taken over all pure state decompositions of $\rho$ as $\rho=\sum\limits_j p_j |\psi_j\rangle\langle \psi_j|$.

Genuine multipartite entanglement (GME) measures are defined to quantify the entanglement which exists in a quantum system across all of its subsystems simultaneously.
An $N$-partite pure state $|\psi\rangle\in H_1\otimes H_2\otimes \cdots\otimes H_N$ is called as biseparable if it can be written as $|\psi\rangle=|\psi_{\gamma}\rangle\otimes|\psi_{\bar{\gamma}}\rangle$,
where $|\psi_{\gamma}\rangle\in H_{k_{1}}\otimes H_{k_{2}}\otimes \cdots\otimes H_{k_{l}}$,  $|\psi_{\bar{\gamma}}\rangle\in H_{k_{l+1}}\otimes H_{k_{l+2}}\otimes \cdots\otimes H_{k_{N}}$.
$\gamma={k_{1},k_{2}\cdots,k_{l}}$ is an arbitrary nonempty subset of $\mathcal{N}=\{1,2,\cdots,N\}$ and $\bar{\gamma}$ the corresponding complementary set with respect to $\mathcal{N}$ (for example, when $N=3$, $1|23$ is a bipartition of $\{1,2,3\}$, $2|13$ is also a bipartition of $\{1,2,3\}$). And an $N$-partite mixed state $\rho$ is  biseparable, if it can be expressed as a convex combination of biseparable pure states, i.e., $\rho=\sum\limits_j p_j |\psi_j\rangle\langle \psi_j|$ with  $|\psi_j\rangle$ being biseparable state. It is worth noting that the $|\psi_j\rangle$ can be biseparable with respect to different bipartitions for different $j$. If a quantum state is not biseparable, then it is genuine multipartite entangled (GME).
Different genuine multipartite entanglement measures have been proposed to quantify the degree of entanglement in genuinely multipartite entangled states.
A function $E: H_1\otimes H_2\otimes \cdots \otimes H_N\rightarrow R$ is a GME measure if it satisfies the following conditions:

 (C1) $E(\rho)=0$ if and only if $\rho$ is biseparable state. 

 (C2) Convexity
 \begin{equation*}
 E(\sum_{j}p_{j}|\psi_{j}\rangle\langle \psi_j|)\leq\sum_{j}p_{j}E(|\psi_{j}\rangle\langle\psi_j|).
 \end{equation*}

 (C3) Monotonicity under nonselective local operations and classical communications (LOCC) $\Lambda$, namely $E(\rho)$ is nonincreasing under LOCC.
 \begin{equation*}
 E(\Lambda(\rho))\leq E(\rho).
 \end{equation*}

 (C4) Monotonicity under selective LOCC, namely $E(\rho)$  is nonincreased on average
under stochastic LOCC,
 \begin{equation*}
 \sum_{j}p_{j}E(\rho_{j})\leq E(\rho).
 \end{equation*}
Here LOCC is stochastic in
the sense that $\rho$ can be converted to $\rho_j$ with some probability
$p_j$.

A multipartite entanglement monotone satisfies conditions (C2-C4),  whereas a GME measure fulfills all four of the aforementioned conditions.
Now we present two approaches for constructing the GME measures based on symmetric concave functions on the probability simplex.
The GME measure constructed by the first approach is called min-GME-measure, while the one constructed by the second approach is called geo-GME-measure.

Consider an $N$-partite pure state $|\psi\rangle\in H_1\otimes H_2\otimes \cdots\otimes H_N$ and a real symmetric concave function $f\in \scf$.
For any nonempty subset  $\gamma$  of $\mathcal{N}=\{1,2,\cdots,N\}$ and the corresponding complementary set $\bar{\gamma}$ with respect to $\mathcal{N}$, we have the bipartition $\gamma|\bar{\gamma}.$
The first approach is as follows. The entanglement measure of $|\psi\rangle$ with respect to the bipartition $\gamma|\bar{\gamma}$ can be defined as follows,
\begin{equation}\label{eq:Efgamma}
E^{f}_{\gamma}(|\psi\rangle):=f(\lambda_{\gamma}(|\psi\rangle)).
\end{equation}
where $\lambda_{\gamma}(|\psi\rangle)$ represents the Schmidt vector of $|\psi\rangle$ under bipartition $\gamma|\bar{\gamma}$.
The min-GME-measure of $|\psi\rangle$, which is denoted as $E^{f}_\gme(|\psi\rangle)$, is defined as  the minimum of $E^{f}_{\gamma}(|\psi\rangle)$ over all bipartitions of $\mathcal{N}$ as follows,
\begin{equation}\label{eq:Efmingme}
E^{f}_\gme(|\psi\rangle):=\min\limits_{\gamma|\bar{\gamma}}E^{f}_{\gamma}(|\psi\rangle),
\end{equation}

The min-GME-measure of any mixed state $\rho \in H_1\otimes H_2\otimes \cdots\otimes H_N$ is defined by the convex roof construction as,
\begin{equation}\label{eq:EfmingmeMix}
E^{f}_\gme(\rho):=\min_{\{p_j, |\psi_j\rangle\}}\sum_{j}p_jE^{f}_\gme(|\psi_j\rangle)
\end{equation}
where the minimum is taken over all pure-state decompositions  $\rho=\sum\limits_j p_j|\psi_j\rangle\langle\psi_j|$. If $E^{f}_\gme$ in the above equation is replaced by $E^f_\gamma$, then we get an entanglement measure with respect to the partition $\gamma$ \cite{GV, ZH}.
\begin{equation}\label{eq:EfgammaMix}
E^{f}_\gamma(\rho):=\min_{\{p_j, |\psi_j\rangle\}}\sum_{j}p_jE^{f}_\gamma(|\psi_j\rangle).
\end{equation}
Due to the above definitions we have $E^{f}_\gme(\rho)\leq E^{f}_\gamma(\rho)$ for any subset $\gamma$ and any state $\rho$.

Another GME measure  geo-GME-measure is defined as the geometric mean of $E^{f}_{\gamma}(|\psi\rangle)$ over all bipartitions $\gamma|\bar{\gamma}$ for any pure state $|\psi\rangle$,
\begin{equation}\label{eq:Efgeogmep}
G^{f}_\gme(|\psi\rangle):=[\prod_{\gamma}E^{f}_{\gamma}(|\psi\rangle)]^{\frac{1}{c(\alpha)}}
\end{equation}
where
\begin{equation*}
c(\alpha)=
\begin{cases}
\sum_{m=1}^{\frac{N-1}{2}}\tbinom{N}{m}, $if \textit{N} is odd$,\\
\sum_{m=1}^{\frac{N-2}{2}}\tbinom{N}{m}+\frac{1}{2}\tbinom{N}{\frac{N}{2}}, $if \textit{N} is even$.\\
\end{cases}
\end{equation*}
The geo-GME-measure can be generalized  to mixed state $\rho$ via a convex roof construction, i.e.
\begin{equation}\label{eq:Efgeogme}
G^{f}_\gme(\rho):=\min_{\{p_j, |\psi_j\rangle\}}\sum_{j}p_jG^{f}_\gme(|\psi_j\rangle),
\end{equation}
here the minimum is taken over all pure-state decompositions  $\rho=\sum\limits_j p_j|\psi_j\rangle\langle\psi_j|$.

\begin{theorem}
For any  $f\in\scf$, $E^{f}_\gme$ and $G^{f}_\gme$ are both entanglement monotones. If in addition $f\in\scfp$, then $E^{f}_\gme$ and $G^{f}_\gme$ are both GME measures.
\end{theorem}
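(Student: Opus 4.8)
The plan is to verify conditions (C2)–(C4) for both $E^f_\gme$ and $G^f_\gme$, and then, under the extra hypothesis $f\in\scfp$, verify (C1). I would begin by recording the two structural facts about Schmidt vectors that drive everything: (i) under a bipartition $\gamma|\bar\gamma$ the Schmidt vector $\lambda_\gamma(|\psi\rangle)$ is exactly the spectrum of the reduced state $\rho_\gamma$, so $E^f_\gamma(|\psi\rangle)=f(\lambda_\gamma(|\psi\rangle))$ is the Vidal-type entanglement monotone associated with the concave symmetric function $f$; and (ii) each $E^f_\gamma$ is therefore already known to be an entanglement monotone on pure states in the fixed bipartite cut (nonincreasing on average under LOCC), a result I may cite from the literature referenced in the excerpt (\cite{GV,ZH}). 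Granting (ii), the proof reduces to showing that the two aggregation operations — pointwise minimum over cuts, and normalized geometric mean over cuts — preserve the monotone properties, and that the convex-roof extension to mixed states does too.

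For $E^f_\gme$: convexity on mixed states (C2) is automatic from the definition of the convex roof as a minimum over decompositions, and monotonicity under nonselective LOCC (C3) follows from (C4) by grouping outcomes and using convexity, so the crux is (C4) on pure states, i.e. $\sum_j p_j E^f_\gme(|\psi_j\rangle)\le E^f_\gme(|\psi\rangle)$ when an LOCC protocol sends $|\psi\rangle$ to $|\psi_j\rangle$ with probability $p_j$. Here the key observation is that LOCC acting on the $N$-partite system is, for each fixed bipartition $\gamma|\bar\gamma$, a (possibly coarser) LOCC across that cut, hence $\sum_j p_j E^f_\gamma(|\psi_j\rangle)\le E^f_\gamma(|\psi\rangle)$ for every $\gamma$; taking the minimum over $\gamma$ on the right and using that a sum of minima is at most the minimum of sums termwise is the wrong direction, so instead I would argue: fix the cut $\gamma^*$ achieving the minimum for $|\psi\rangle$; then $\sum_j p_j E^f_\gme(|\psi_j\rangle)\le \sum_j p_j E^f_{\gamma^*}(|\psi_j\rangle)\le E^f_{\gamma^*}(|\psi\rangle)=E^f_\gme(|\psi\rangle)$. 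The same "fix the minimizing cut" trick gives (C4) and then (C3); extension to mixed states via the convex roof is a standard argument (any decomposition-plus-LOCC of $\rho$ pulls back to one for the pure pieces).

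For $G^f_\gme$: the geometric mean $[\prod_\gamma E^f_\gamma]^{1/c(\alpha)}$ is not obviously convex, so (C2) on mixed states needs the observation that $G^f_\gme$ on pure states is a nonnegative quantity whose convex roof is by construction convex; monotonicity (C3),(C4) on pure states follows because the geometric mean is monotone in each argument and $\prod_\gamma(\sum_j p_j E^f_\gamma(|\psi_j\rangle))\ge \prod_\gamma(\text{geometric mean of }E^f_\gamma(|\psi_j\rangle)\text{ over }j)$ — more carefully, one uses that each factor is nonincreasing on average under LOCC together with the super-multiplicativity of the geometric mean over a fixed index set (a consequence of Hölder / the AM–GM-type inequality $\prod_\gamma \mathbb{E}_j[x_{\gamma,j}]\ge \mathbb{E}_j[\prod_\gamma x_{\gamma,j}^{1/K}]^K$ is again the wrong direction, so instead I would invoke concavity of $\log$ only after reducing to a single deterministic LOCC step and a subsequent averaging, exactly as one proves that products of monotones raised to reciprocal powers are monotones). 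Finally, (C1) under $f\in\scfp$: for a pure state, $E^f_\gme(|\psi\rangle)=0$ iff some $E^f_\gamma(|\psi\rangle)=0$ iff (by the defining property of $\scfp$) some reduced state $\rho_\gamma$ is pure iff $|\psi\rangle$ is product across that cut iff $|\psi\rangle$ is biseparable; for $G^f_\gme$ the product vanishes iff some factor does, giving the same conclusion. The mixed-state case of (C1) then follows: if $\rho$ is biseparable it has a decomposition into biseparable pure states, each of which has measure zero, so the convex roof is zero; conversely if the convex roof is zero there is (by a compactness argument on the simplex of decompositions) an optimal decomposition into pure states of measure zero, hence into biseparable pure states, so $\rho$ is biseparable.

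The main obstacle I anticipate is the mixed-state monotonicity of $G^f_\gme$ — specifically proving (C3)/(C4) for the convex-roof extension of a geometric mean rather than of a linear combination of monotones. For $E^f_\gme$ the "fix the minimizing bipartition" device cleanly linearizes the problem, but for the geometric mean there is no single dominating term, and one must be careful that the product structure interacts correctly with both the stochastic averaging in (C4) and the subsequent convex-roof minimization; the cleanest route is probably to first establish that $G^f_\gme$ on \emph{pure} states is an entanglement monotone (using that a product of pure-state entanglement monotones over a fixed bipartition index set, each raised to a fixed positive power summing appropriately, is itself a monotone — a fact that reduces to Uhlmann's theorem plus the behavior of Schmidt vectors under the Nielsen-type majorization steps), and then invoke the general principle that the convex roof of any pure-state entanglement monotone is a mixed-state entanglement monotone. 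Everything else is bookkeeping: checking that $c(\alpha)$ is exactly the number of distinct bipartitions (so the exponent $1/c(\alpha)$ makes $G^f_\gme$ a genuine normalized geometric mean), and that symmetry of $f$ makes $E^f_\gamma$ well-defined independent of which side of the cut one computes the Schmidt vector on.
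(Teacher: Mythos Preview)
Your treatment of $E^f_\gme$ matches the paper's exactly: fix the minimizing cut $\gamma^*$ for the pure input state and chain $\sum_j p_j E^f_\gme(|\psi_j\rangle)\le \sum_j p_j E^f_{\gamma^*}(|\psi_j\rangle)\le E^f_{\gamma^*}(|\psi\rangle)=E^f_\gme(|\psi\rangle)$. Your (C1) argument is also fine, and in fact more careful than the paper's one-line ``straightforward to verify.''

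The gap is in your handling of $G^f_\gme$. You write down the inequality
\[
\prod_\gamma \mathbb{E}_j[x_{\gamma,j}]\ \ge\ \Bigl(\mathbb{E}_j\Bigl[\textstyle\prod_\gamma x_{\gamma,j}^{1/K}\Bigr]\Bigr)^{K}
\]
and then dismiss it as ``again the wrong direction.'' It is the \emph{right} direction, and it is exactly what the paper uses. Taking $K$th roots, this reads $\bigl(\prod_\gamma \mathbb{E}_j[x_{\gamma,j}]\bigr)^{1/K}\ge \mathbb{E}_j\bigl[(\prod_\gamma x_{\gamma,j})^{1/K}\bigr]$, which is precisely the concavity of the geometric mean $(y_1,\dots,y_K)\mapsto(\prod_\gamma y_\gamma)^{1/K}$ on the nonnegative orthant (equivalently, Mahler's inequality). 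Setting $x_{\gamma,j}=E^f_\gamma(|\psi_j\rangle)$ and using that each bipartite monotone $E^f_\gamma$ is nonincreasing on average under LOCC, one gets in one line
\[
G^f_\gme(|\psi\rangle)\ \ge\ \Bigl(\prod_\gamma \sum_j p_j E^f_\gamma(|\psi_j\rangle)\Bigr)^{1/K}\ \ge\ \sum_j p_j\Bigl(\prod_\gamma E^f_\gamma(|\psi_j\rangle)\Bigr)^{1/K}=\sum_j p_j\, G^f_\gme(|\psi_j\rangle),
\]
which is (C4) on pure states; the convex-roof extension then handles mixed states as you already noted. The paper does exactly this, citing ``the concavity of the geometric mean function and Mahler's inequality.'' Your detours through ``concavity of $\log$ after reducing to a single deterministic LOCC step,'' Uhlmann's theorem, and Nielsen-type majorization are unnecessary and leave the argument unfinished; the missing idea is simply to recognize that the inequality you wrote down is valid and points the way you need.
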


The proof is in Appendix A.
Then we present the relationship of these two types of GME measures.
\begin{theorem}
For an arbitrary $N$-partite mixed state $\rho$, one can prove the following inequality,
\begin{equation}\label{}
  G_{\mathrm{GME}}^f(\rho)\geq E_{\mathrm{GME}}^f(\rho).
\end{equation}
\end{theorem}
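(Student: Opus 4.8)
The plan is to reduce the mixed-state inequality to a pure-state inequality via the convex roof, and then establish the pure-state case through the classical AM--GM inequality applied to the quantities $E^{f}_{\gamma}(|\psi\rangle)$ across bipartitions. First I would show that for any $N$-partite pure state $|\psi\rangle$ one has $G^{f}_\gme(|\psi\rangle)\geq E^{f}_\gme(|\psi\rangle)$. By definition $E^{f}_\gme(|\psi\rangle)=\min_{\gamma}E^{f}_{\gamma}(|\psi\rangle)$, so denoting this minimum by $m$ we have $E^{f}_{\gamma}(|\psi\rangle)\geq m\geq 0$ for every bipartition $\gamma$. Since $G^{f}_\gme(|\psi\rangle)$ is the geometric mean $[\prod_{\gamma}E^{f}_{\gamma}(|\psi\rangle)]^{1/c(\alpha)}$ — where $c(\alpha)$ is precisely the number of (unordered) bipartitions being multiplied, so that the product is a true geometric mean — and the geometric mean of a finite collection of nonnegative numbers is at least their minimum, we immediately get $G^{f}_\gme(|\psi\rangle)\geq m = E^{f}_\gme(|\psi\rangle)$.

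Next I would lift this to mixed states. Let $\rho=\sum_j p_j|\psi_j\rangle\langle\psi_j|$ be an optimal decomposition for $G^{f}_\gme$, i.e. $G^{f}_\gme(\rho)=\sum_j p_j G^{f}_\gme(|\psi_j\rangle)$. Applying the pure-state inequality termwise gives
\begin{equation*}
G^{f}_\gme(\rho)=\sum_j p_j G^{f}_\gme(|\psi_j\rangle)\geq \sum_j p_j E^{f}_\gme(|\psi_j\rangle)\geq E^{f}_\gme(\rho),
\end{equation*}
where the last inequality holds because $E^{f}_\gme(\rho)$ is itself defined as the \emph{minimum} of $\sum_j p_j E^{f}_\gme(|\psi_j\rangle)$ over \emph{all} pure-state decompositions, and the chosen decomposition is one particular such decomposition. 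This chain establishes \eref{} as stated.

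The only genuinely delicate point — and the step I would treat most carefully — is the bookkeeping for $c(\alpha)$: one must confirm that the product $\prod_{\gamma}E^{f}_{\gamma}(|\psi\rangle)$ in \eref{eq:Efgeogmep} runs over exactly $c(\alpha)$ factors, counting each bipartition $\gamma|\bar\gamma$ once (the case distinction by parity of $N$ and the $\frac12\binom{N}{N/2}$ term for even $N$ reflect exactly the identification of $\gamma$ with $\bar\gamma$). Once that matching is verified, $G^{f}_\gme(|\psi\rangle)$ is literally the geometric mean of the $c(\alpha)$ numbers $E^{f}_{\gamma}(|\psi\rangle)$, and the min-vs-geometric-mean comparison is elementary. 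No convexity or monotonicity properties beyond the definitions are needed; the argument is purely order-theoretic plus the standard fact $\min_i a_i \le (\prod_i a_i)^{1/n}$ for $a_i\ge 0$.
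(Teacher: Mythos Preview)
Your proposal is correct and follows essentially the same approach as the paper: establish the pure-state inequality $G^{f}_\gme(|\psi\rangle)\geq E^{f}_\gme(|\psi\rangle)$ via the elementary fact that the geometric mean of nonnegative numbers dominates their minimum, then lift to mixed states by taking an optimal decomposition for $G^{f}_\gme$ and using the convex-roof definition of $E^{f}_\gme$. The only cosmetic difference is that you label the key step ``AM--GM'' when in fact you (and the paper) use the simpler observation $\min_i a_i\leq(\prod_i a_i)^{1/n}$; your extra care about the factor count $c(\alpha)$ is a reasonable sanity check but not something the paper dwells on.
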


One can see detailed proof in Appendix B.

As illustrative instances of GME measures, consider the function $f(p)=\sqrt{2\sum_{i\neq j}p_{i}p_{j}}$. In this case $E^{f}_\gme(\rho)$ turns into  the GME concurrence \cite{ZH}, and $G^{f}_\gme(\rho)$ becomes the geometric mean of bipartite concurrence (GBC) \cite{YJ} multiplied by a constant. The corresponding function is given by $f(p)=\sqrt{\frac{d_{min}}{d_{min}-1}\sum_{i\neq j}p_{i}p_{j}}$, where $d_{min}$ denotes the dimension of the smaller subsystem under bipartition $\gamma|\bar{\gamma}$.
When the function is defined as $f(p)=-\sum_{j}p_{j}\log p_{j}$,  $E^{f}_\gme(\rho)$ represents the multipartite counterpart of the entanglement of formation and $G^{f}_\gme(\rho)$ is the geometric mean of entanglement of formation.

Worthy of special note, in addition to fulfilling the aforementioned properties (C1-C5), the geometric genuine multipartite entanglement  (geo-GME) measure is also characterized by its smoothness. Notably, the geo-GME measure does not exhibit sharp peaks when assessing continuously varying pure states. The detail is thoroughly explored in \cite{YJ}.

\section{Operational Relation Between GME And Coherence}

Hereafter, we explore the interconnection between GME and coherence through the construction of a specific type of unitary incoherent operation (UIO), which we denote as
$U$. Consider initial state $\rho\otimes |0\rangle\langle 0|\otimes\cdots\otimes |0\rangle\langle 0|$ with
$\rho$ in $H^{d}$ and $N-1$ ancilla quantum states $|0\rangle\langle 0|\otimes\cdots\otimes |0\rangle\langle 0|$ in $ H^{d_{1}}\otimes...\otimes H^{d_{N-1}}$ satisfying $d\leq d_{i}, 1\leq i\leq N-1$.
We employ UIO as the converting unitary operator
\begin{equation*}
U=\sum_{i=0}^{d-1}|i\rangle\langle i|\otimes \sigma_{i}^{\otimes N-1},
\end{equation*}
where
\begin{equation}\label{7}
\sigma_{i}=
\begin{cases}
\mathbb{I},&\mbox{if $i=0$},\\
P_{\pi(i)},&\mbox{if $i\neq0$},\\
\end{cases}
\end{equation}
where $\mathbb{I}$ is the identity operator and $P_{\pi(i)}$ the permutation operator that transforms $|0\rangle$ into $|i\rangle$, that is,

\begin{equation*}
P_{\pi(i)}|0\rangle\langle 0|P_{\pi(i)}^{T}=|i\rangle\langle i|.
\end{equation*}

In the theorem below we prove that the operator UIO, namely $U$, can fully convert the coherence of initial state $\rho$ into genuine multipartite entanglement {of the $N$-partite state $\rho'$ via the following process}
\begin{equation*}
\rho'=U(\rho\otimes|0\rangle\langle 0|^{\otimes N-1})U^{\dag}.
\end{equation*}
Note that this operation is incoherent, which implies no additional coherence is involved during the converting process.

\begin{theorem}
For any $f\in\scfp$, we have
\begin{equation*}
E^{f}_{GME}(\rho')=G^{f}_{GME}(\rho')=C^{f}(\rho).
\end{equation*}
\end{theorem}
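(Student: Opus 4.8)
The idea is to settle the identity first for pure states by a direct computation and then transport it to mixed states by observing that all three quantities are convex roofs over literally the same set of decompositions. For a pure state $|\psi\rangle=\sum_{i=0}^{d-1}c_i|i\rangle\in\caH^d$ I would begin by unraveling the action of the UIO $U$: since $\sigma_0=\mathbb{I}$ and $\sigma_i|0\rangle=P_{\pi(i)}|0\rangle=|i\rangle$ for $i\neq 0$,
\[
|\psi'\rangle:=U\bigl(|\psi\rangle\otimes|0\rangle^{\otimes N-1}\bigr)=\sum_{i=0}^{d-1}c_i\,|i\rangle^{\otimes N},
\]
a GHZ-type state carrying exactly the amplitudes of $|\psi\rangle$ (here $|i\rangle^{\otimes N}$ denotes $|i\rangle$ on the system $S$ together with $|i\rangle$ on each of the $N-1$ ancillas, which is legitimate since $d\leq d_k$ for every ancilla). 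The structural point is that for \emph{every} bipartition $\gamma|\bar\gamma$ of $\mathcal N$, grouping the $N$ registers into the block indexed by $\gamma$ and the block indexed by $\bar\gamma$ already exhibits a Schmidt decomposition of $|\psi'\rangle$, because $\{|i\rangle^{\otimes|\gamma|}\}_i$ and $\{|i\rangle^{\otimes|\bar\gamma|}\}_i$ are orthonormal families. Hence $\lambda_\gamma(|\psi'\rangle)=(|c_0|^2,\dots,|c_{d-1}|^2)=\mu(|\psi\rangle)$ for all $\gamma$, up to zero-padding to which $f$ is insensitive.

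Granting this, the pure-state statement is immediate. By \eref{eq:Efgamma}, $E^f_\gamma(|\psi'\rangle)=f(\lambda_\gamma(|\psi'\rangle))=f(\mu(|\psi\rangle))=C^f(|\psi\rangle)$ for every $\gamma$; minimizing over bipartitions as in \eref{eq:Efmingme} gives $E^f_{\mathrm{GME}}(|\psi'\rangle)=C^f(|\psi\rangle)$, and since every factor of the product in \eref{eq:Efgeogmep} equals this same number, its geometric mean is $C^f(|\psi\rangle)$ as well, so $G^f_{\mathrm{GME}}(|\psi'\rangle)=C^f(|\psi\rangle)$.

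To pass to mixed states I would compare the three convex-roof minimizations \eref{eq:EfmingmeMix}, \eref{eq:Efgeogme} and \eref{eq:Cfmix}. Put $V=\mathrm{span}\{\,|i\rangle^{\otimes N}:0\leq i\leq d-1\,\}$. Because $\rho\otimes|0\rangle\langle 0|^{\otimes N-1}$ is supported inside $\caH^d\otimes\mathrm{span}\{|0\rangle^{\otimes N-1}\}$, the state $\rho'=U(\rho\otimes|0\rangle\langle0|^{\otimes N-1})U^\dagger$ is supported inside $V$, and $|\psi\rangle\mapsto U(|\psi\rangle\otimes|0\rangle^{\otimes N-1})$ is an isometry of $\caH^d$ onto $V$. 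Consequently every pure state appearing with positive weight in \emph{any} decomposition of $\rho'$ lies in $\mathrm{range}(\rho')\subseteq V$ and hence equals $|\psi'\rangle$ for a unique $|\psi\rangle\in\caH^d$; conjugating a decomposition by $U^\dagger$ and tracing out the ancillas shows that $\rho'=\sum_j p_j|\psi'_j\rangle\langle\psi'_j|$ if and only if $\rho=\sum_j p_j|\psi_j\rangle\langle\psi_j|$. This is a weight-preserving bijection between the pure-state decompositions of $\rho'$ and those of $\rho$, under which the summands match term by term via the pure-state identity. Minimizing over the two identified feasible sets yields $E^f_{\mathrm{GME}}(\rho')=G^f_{\mathrm{GME}}(\rho')=C^f(\rho)$.

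The step I expect to be the real obstacle is precisely the claim that \emph{every} pure-state decomposition of $\rho'$ — not merely those manufactured from a decomposition of $\rho$ — is supported on $V$; this is what upgrades the correspondence above from a one-sided inequality to an exact identification of the two convex-roof problems. It rests on the elementary fact that any pure state with positive weight in a decomposition of a density operator lies in its range, together with $\mathrm{range}(\rho')\subseteq V$. A secondary point worth spelling out is that $f$ must be evaluated consistently across dimensions, so that it does not distinguish $\mu(|\psi\rangle)$ from its zero-padded version $\lambda_\gamma(|\psi'\rangle)$; with that understood, everything else is bookkeeping with the orthonormality of the $|i\rangle^{\otimes N}$. (The argument in fact uses only $f\in\scf$; the hypothesis $f\in\scfp$ serves only to guarantee that $C^f$, $E^f_{\mathrm{GME}}$ and $G^f_{\mathrm{GME}}$ are faithful measures rather than mere monotones.)
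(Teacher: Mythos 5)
Your proposal is correct and follows essentially the same route as the paper: establish the pure-state identity by noting that every bipartition of the GHZ-type image state has Schmidt vector equal to the coherence vector, then transfer to mixed states via the one-to-one correspondence of pure-state ensembles of $\rho$ and $\rho'$ induced by the isometry $|\psi\rangle\mapsto U(|\psi\rangle\otimes|0\rangle^{\otimes N-1})$. In fact you make explicit a point the paper leaves implicit, namely that \emph{every} decomposition of $\rho'$ pulls back through the isometry because its pure components lie in $\mathrm{range}(\rho')\subseteq\mathrm{span}\{|i\rangle^{\otimes N}\}$.
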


The detailed proof is in Appendix C.

%$E_{GME}^{f}$ is exactly the GME-concurrence $C_{GME}$ when we choose $f(p)=2\sqrt{\sum_{i<j}p_{i}p_{j}}$. Since any initial pure state in our setting can be represented as $|\phi\rangle=\sum_{i=0}^{d-1}c_{i}|i\rangle|0\rangle^{\otimes n}$, where $\sum_{i=0}^{d-1}c_{i}^{2}=1$.
%Let $\gamma|\gamma'$ be arbitrary bipartition of $|\phi'\rangle$ and denote by $\mu_{\gamma}'$ its reduced density matrix, we can find that $tr(\mu_{\gamma}'^{2})=\sum_{i=0}^{d-1}c_{i}^{4}$. According to the definition of GME-concurrence, we have
%\begin{equation}\label{9}
%C_{GME}(|\phi'\rangle)=2\sqrt{\sum_{i<j}c_{i}^{2}c_{j}^{2}},
%\end{equation}
%which is equivalent to the double of its $l2$-norm coherence
%\begin{equation}\label{10}
%C_{\emph{l2}}(|\psi'\rangle\langle\psi'|)=\sqrt{\sum_{i<j}|\langle %i|\psi'\rangle\langle\psi'|j\rangle|^{2}}.
%\end{equation}

%We remark that although the square of $l2$-norm coherence (which is also name as $l2$ norm coherence in the reference) 'does not constitute a valid coherence monotone \cite{TB}, but it was proved by Cheng and Hall in \cite{SC} that the modified measure in Eq,(10) satisfies necessary conditions. It is shown in \cite{SD} that the convex-roof of $l2$ norm is a legit coherence measure.

It is worthy to be mentioned, using Theorem 3, a lot of noval GME measures can be derived from coherence measures, and vice versa, revealing profound connections between multipartite entanglement and coherence.
For example, we consider $N$-qubit case, $U_{Nq}=|0\rangle\langle 0|\otimes I_{2}^{\otimes N-1}+|1\rangle\langle 1|\otimes \sigma_{x}^{\otimes N-1}$, where $I_{2}$ and $\sigma_{x}$ stand for two dimensional identity matrix and Pauli $X$ matrix. The quantum circuit of $U_{Nq}$ is described in FIG. \ref{Fig.2}. Then we can get the connection between GME-concurrence and $l_1$-norm coherence measure, shown in the following proposition.

\begin{proposition}
For any $N$-qubit (mix or pure) state $\rho$, the GME-concurrence of $\rho'$ is equal to the $l_1$-norm coherence measure of $\rho$.
\end{proposition}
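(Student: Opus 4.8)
The plan is to read the proposition off Theorem~3 by choosing the symmetric concave function $f$ that generates the GME-concurrence, and then identifying the companion coherence monotone $C^{f}$ on a single qubit with the $l_1$-norm of coherence. Throughout, $\rho$ denotes the single-qubit input state and $\rho'=U_{Nq}\bigl(\rho\otimes|0\rangle\langle 0|^{\otimes N-1}\bigr)U_{Nq}^{\dag}$; since $U_{Nq}$ is exactly the $d=2$ (with each $d_i=2$) instance of the UIO $U$, Theorem~3 applies.

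First I would fix $f(p)=\sqrt{2\sum_{i\neq j}p_ip_j}=\sqrt{2\bigl(1-\sum_i p_i^{2}\bigr)}$. This $f$ is symmetric, and it is concave as the composition of the concave nonincreasing map $t\mapsto\sqrt{2(1-t)}$ with the convex map $p\mapsto\sum_i p_i^{2}$; moreover $f(p)=0$ exactly when $\max_i p_i=1$, so $f\in\scfp$ and Theorem~3 yields
\begin{equation*}
E^{f}_{\gme}(\rho')=G^{f}_{\gme}(\rho')=C^{f}(\rho).
\end{equation*}
Since the Schmidt vector $\lambda_\gamma$ of a pure state across the cut $\gamma|\bar\gamma$ is the eigenvalue vector of the reduced state $\rho_\gamma$, one has $E^{f}_\gamma(|\psi\rangle)=f(\lambda_\gamma)=\sqrt{2(1-\Tr\rho_\gamma^{2})}$, the $\gamma$-bipartite concurrence; hence $E^{f}_{\gme}$ coincides with the GME-concurrence \cite{ZH} on pure states, and therefore also on mixed states because both are extended by the same convex roof. (For qubits $\tfrac{d_{\min}}{d_{\min}-1}=2$, so no normalization ambiguity arises.) It thus remains to show $C^{f}(\rho)=C_{l_1}(\rho)$ for an arbitrary single-qubit state $\rho$.

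For a pure qubit $|\psi\rangle=c_0|0\rangle+c_1|1\rangle$, using $|c_0|^{4}+|c_1|^{4}=1-2|c_0|^{2}|c_1|^{2}$, one gets $C^{f}(|\psi\rangle)=f\bigl(\mu(|\psi\rangle)\bigr)=\sqrt{2(1-|c_0|^{4}-|c_1|^{4})}=2|c_0||c_1|=2\bigl|\langle 0|\psi\rangle\langle\psi|1\rangle\bigr|$, i.e. the $l_1$-coherence of $|\psi\rangle$. For a mixed qubit $\rho$, the inequality $C^{f}(\rho)\ge C_{l_1}(\rho)=2|\rho_{01}|$ is immediate from the triangle inequality: every decomposition $\rho=\sum_j p_j|\psi_j\rangle\langle\psi_j|$ obeys $\sum_j p_j C^{f}(|\psi_j\rangle)=\sum_j p_j\,2|(\psi_j)_{01}|\ge 2\bigl|\sum_j p_j(\psi_j)_{01}\bigr|=2|\rho_{01}|$. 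For the reverse inequality I would exhibit an optimal ``phase-aligned'' decomposition: writing $\rho_{01}=|\rho_{01}|e^{i\phi}$ and using $\rho\ge 0\iff|\rho_{01}|^{2}\le\rho_{00}(1-\rho_{00})$, set $x_{\pm}=\tfrac12\pm\sqrt{\tfrac14-|\rho_{01}|^{2}}$, so $x_-\le\rho_{00}\le x_+$; the pure states $|\psi_{\pm}\rangle=\sqrt{x_{\pm}}\,|0\rangle+e^{-i\phi}\sqrt{1-x_{\pm}}\,|1\rangle$ each have $x_{\pm}(1-x_{\pm})=|\rho_{01}|^{2}$, hence off-diagonal element exactly $\rho_{01}$ and cost $C^{f}(|\psi_{\pm}\rangle)=2|\rho_{01}|$; the convex combination of $|\psi_+\rangle\langle\psi_+|$ and $|\psi_-\rangle\langle\psi_-|$ with the unique weights that reproduce $\rho_{00}$ equals $\rho$ and has average cost $2|\rho_{01}|$. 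Therefore $C^{f}(\rho)=2|\rho_{01}|=C_{l_1}(\rho)$, and chaining with the previous paragraph shows that the GME-concurrence of $\rho'$ equals $C^{f}(\rho)=C_{l_1}(\rho)$.

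Essentially everything is bookkeeping around Theorem~3; the single non-mechanical ingredient is the explicit two-term decomposition above, equivalently the observation that the pair $(\rho_{00},|\rho_{01}|)$ always lies in the convex hull of the Bloch semicircle $\{(x,\sqrt{x(1-x)}):x\in[0,1]\}$. I expect that to be the only real obstacle; alternatively one can simply cite the known fact that for a single qubit the $l_1$-norm of coherence already equals its convex-roof extension, which makes the reverse inequality automatic.
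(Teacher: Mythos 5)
Your proof is correct, but it takes a genuinely different route from the paper's. The paper argues directly on the output state: it computes $\rho'=U_{Nq}(\rho\otimes|0\rangle\langle 0|^{\otimes N-1})U_{Nq}^{\dag}=\sum_{i,j}\rho_{ij}|i\cdots i\rangle\langle j\cdots j|$, observes that this is an $N$-qubit X-state, and then invokes the known closed-form expression for the genuinely multipartite concurrence of X-matrices \cite{ZH,SMH} to read off the value $\sum_{i\neq j}|\rho_{ij}|=2|\rho_{01}|=C_{l_1}(\rho)$ (after first checking the pure-state case by hand). You instead never touch $\rho'$ beyond noting that Theorem~3 applies: you pick $f(p)=\sqrt{2\sum_{i\neq j}p_ip_j}\in\scfp$, identify $E^{f}_{\gme}$ with the GME-concurrence (which the paper itself asserts), and reduce everything to the single-qubit statement $C^{f}(\rho)=C_{l_1}(\rho)$, which you prove self-containedly via the triangle-inequality lower bound and the explicit phase-aligned two-state decomposition with $x_{\pm}=\tfrac12\pm\sqrt{\tfrac14-|\rho_{01}|^{2}}$ (your algebra there checks out, including positivity giving $x_-\leq\rho_{00}\leq x_+$ and each $|\psi_\pm\rangle$ having off-diagonal element exactly $\rho_{01}$). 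The trade-off: the paper's proof is shorter but leans on the external X-state concurrence formula of \cite{SMH}, whereas yours makes heavier use of Theorem~3 and in exchange is self-contained, and as a by-product establishes the useful qubit fact that the convex-roof extension of the $l_1$-norm (equivalently, the concurrence-type coherence monotone) collapses to $C_{l_1}$ itself. One small caveat: like the paper's own proof, you read ``$N$-qubit state $\rho$'' in the proposition as a single-qubit input coupled to $N-1$ incoherent ancillas, which is the only reading compatible with $U_{Nq}$; it is worth stating that interpretation explicitly.
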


\begin{proof}
The $l_1$-norm coherence measure of $\rho$ is defined as the sum of the modulus of the nondiagonal elements of $\rho$, i.e., $C_{l_1}(\rho)=\sum_{i\neq j}|\rho_{ij}|$, where $\rho_{ij}$ represents the entries of $\rho$. 
Inspired by the method of \cite{HZ}, first we consider pure state, let  $|\Psi\rangle=(\alpha|0\rangle+\beta|1\rangle)|0^{\otimes (N-1)}\rangle$, where $|0\rangle^{\otimes (N-1)}$ is $N-1$ ancilla qubits and $|\alpha|^{2}+|\beta|^{2}=1$. Actually, $|\Psi^{'}\rangle=U_{Nq}|\Psi\rangle=\alpha|0\rangle^{\otimes (N)}+\beta|1\rangle^{\otimes (N)}$ is an X-state and its GME-concurrence is $2|\alpha\beta|$ \cite{SMH}, which is equal to the $l_1$-norm coherence measure of $|\Psi\rangle$.

For mixed state $\rho_{m}=\sum_{i,j=0}^{1}\rho_{ij}|i\rangle\langle j|$, we can use the method similar to the case of pure states, let
\begin{eqnarray*}
\rho_{m}^{in}=(\sum_{i,j=0}^{1}\rho_{ij}|i\rangle\langle j|)\otimes (|0\rangle\langle0|)^{\otimes (N-1)}.
\end{eqnarray*}
Obviously, $\rho^{'}=U_{Nq}\rho_{m}^{in} U_{Nq}^{\dag}=\sum_{i,j=0}^{1}\rho_{ij}|ii\cdots i\rangle\langle jj \cdots j|$ is a $2^{N}$-dimensional X-state. Due to the result in \cite{ZH, SMH}, we can calculate the GME-concurrence of $\rho^{'}$ is $\sum_{i\neq j}|\rho_{ij}|$, which is equal to the $l_1$-norm coherence of $\rho$.
\end{proof}

%We construct another isometry $X$ to all pure states $\nu=|\varphi\rangle\langle\varphi|$ in $H^{d1}\otimes H^{d2}\otimes...\otimes H^{dn} (d1\leq d2\leq...\leq dn)$, where $d1,d2,...,dn$ are dimensions of subsystem $H^{d1},H^{d2},...,H^{dn}$. The form is $X=\sum^{n-1}_{i=0}|i\rangle\langle i|\otimes \varrho^{\bigotimes s_{1}}\bigotimes P_{\pi(i)}\bigotimes \varrho^{\bigotimes s_{2}}$, where
%\begin{equation}\label{9}
%s_{1}=
%\begin{cases}
%0 &\mbox{if $i=0$}\\
%i-1 &\mbox{if $i\geq 1$}\\
%\end{cases},
%\end{equation}
%\begin{equation}\label{10}
%s_{2}=
%\begin{cases}
%n-2 &\mbox{if $i=0$}\\
%n-i-1 &\mbox{if $i\geq 1$}\\
%\end{cases},
%\end{equation}
%and $\varrho$ is identity matrix in respective subsystems $H^{d2},H^{d3}...,H^{dn}$.
%By using the method in the proof of \textit{proposition}, we construct the state $|\Phi\rangle=|\varphi\rangle|0^{\bigotimes n-1}\rangle$, if $\nu$ has coherence number $k$, i.e., $|\varphi\rangle=\sum_{i=0}^{k-1}c_{i}|i\rangle$, we can get the $k$-partite entangled state $|\Phi'\rangle=\sum_{i=0}^{k-1}c_{i}|i\rangle|0^{s_{1}}\rangle|i\rangle|0^{s_{2}}\rangle$ after acting $X$ on $|\Phi\rangle$ and replacing $n$ in the definition of $X$ with $k$. Due to the invertibility of isometry $X$, it is proved that $\nu'$ is $k$-partite entangled iff $\nu$ has coherence number $k (2\leq k\leq d)$. In particular, $\nu'$ is genuine multipartite entangled iff $\nu$ has the maximal coherence number.

Recently, a new GME measure named as geometric mean of bipartite concurrence (GBC) was introduced in \cite{YJ}. Besides satisfying the conditions (C1-C4), GBC also has some good properties such as smoothness (which means that GBC cannot make sharp peaks under continuous measurements of variable pure states) as described in Section \Rmnum{2}. Based on the results in Theorem 3 and Proposition 1, we can get the following proposition.
\begin{proposition}
For all $N$-qubit (mixed or pure) state $\rho$, GBC of $\rho'$ is equivalent to the $l_{1}$-norm coherence of $\rho$ in terms of a constant factor.
\end{proposition}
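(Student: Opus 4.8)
The plan is to combine Proposition~1 with Theorem~3 applied to the specific concave function that generates GBC. Recall from Section~\Rmnum{2} that with the choice $f(p)=\sqrt{\frac{d_{\min}}{d_{\min}-1}\sum_{i\neq j}p_i p_j}$ (equivalently, for qubit bipartitions where $d_{\min}=2$, $f(p)=\sqrt{2\sum_{i\neq j}p_i p_j}$), the geo-GME-measure $G^f_{\mathrm{GME}}$ is precisely GBC up to a fixed multiplicative constant, and the corresponding min-GME-measure $E^f_{\mathrm{GME}}$ is the GME-concurrence. I would first make this identification explicit: for the state $\rho'=U_{Nq}\rho^{in}U_{Nq}^\dag$ obtained from an $N$-qubit input, every bipartition $\gamma|\bar\gamma$ has a smaller subsystem that is (at least) one qubit, so the constant $\frac{d_{\min}}{d_{\min}-1}$ is uniform across bipartitions and $G^f_{\mathrm{GME}}(\rho')$ equals GBC$(\rho')$ times a constant depending only on $N$.

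Next I would invoke Theorem~3, which states that for any $f\in\scfp$ the equality $E^f_{\mathrm{GME}}(\rho')=G^f_{\mathrm{GME}}(\rho')=C^f(\rho)$ holds. One must check that the particular $f$ generating GBC lies in $\scfp$: it is a symmetric concave function on the simplex (the square root of a concave quadratic form that is Schur-concave), and $f(p)=0$ exactly when $\sum_{i\neq j}p_ip_j=0$, i.e.\ when some $p_k=1$, so indeed $f\in\scfp$. Hence Theorem~3 applies and gives $G^f_{\mathrm{GME}}(\rho')=C^f(\rho)$. Combining with the previous paragraph, GBC$(\rho')$ equals $C^f(\rho)$ up to the $N$-dependent constant.

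It then remains to connect $C^f(\rho)$ to the $l_1$-norm coherence. By Proposition~1, the GME-concurrence of $\rho'$—which is exactly $E^f_{\mathrm{GME}}(\rho')$ for the concurrence-type $f$—equals $C_{l_1}(\rho)$. But Theorem~3 also asserts $E^f_{\mathrm{GME}}(\rho')=G^f_{\mathrm{GME}}(\rho')$ for this same $f$, so the min-GME and geo-GME coincide on $\rho'$, and therefore GBC$(\rho')$ (up to the constant) equals $C_{l_1}(\rho)$. Assembling the chain $\mathrm{GBC}(\rho')\propto G^f_{\mathrm{GME}}(\rho')=E^f_{\mathrm{GME}}(\rho')=\mathrm{GME\text{-}concurrence}(\rho')=C_{l_1}(\rho)$ completes the argument.

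The routine part is bookkeeping the constants: one should state the proportionality factor cleanly, noting that the $\frac{1}{c(\alpha)}$ exponent in the definition of $G^f_{\mathrm{GME}}$ and the normalization $\sqrt{2}$ versus $\sqrt{\frac{d_{\min}}{d_{\min}-1}}$ interact, but for $N$-qubit states all smaller subsystems in any bipartition can be taken to have the minimal relevant dimension, so the factor is a single well-defined constant. The only genuine subtlety—and the step I expect to require the most care—is verifying that the convex-roof optimization behaves consistently: that the decomposition achieving the minimum for $E^f_{\mathrm{GME}}(\rho')$, for $G^f_{\mathrm{GME}}(\rho')$, and for $C_{l_1}(\rho)$ can be taken to be the same one (which follows from the structure of the proof of Theorem~3, where the optimal decomposition of $\rho'$ is pulled back from an optimal decomposition of $\rho$), so that the equalities above are not merely numerical coincidences on pure states but survive the convex roof. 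Once that is granted, the proposition is immediate.
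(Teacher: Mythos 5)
Your argument is correct and takes essentially the same route the paper intends: Proposition~2 is obtained by identifying GBC with $G^{f}_{\mathrm{GME}}$ (up to a constant) for the concurrence-type $f$, then invoking Theorem~3 together with Proposition~1. One minor slip worth fixing: for $N\ge 4$ the normalization $d_{\min}/(d_{\min}-1)$ is \emph{not} uniform across bipartitions (e.g.\ a $2|2$ cut has $d_{\min}=4$), but since each bipartition's ratio between $\sqrt{\tfrac{d_{\min}}{d_{\min}-1}\sum_{i\neq j}p_ip_j}$ and $\sqrt{2\sum_{i\neq j}p_ip_j}$ is a state-independent constant, their geometric mean still gives a single $N$-dependent factor that survives the convex roof, so your conclusion is unaffected.
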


%\begin{proof}
%Similar to the above proof, we can calculate the GBC of $|\Psi^{'}\rangle$ is
%\begin{equation*}
%G(|\Psi^{'}\rangle)=
%\begin{cases}
%\sqrt[c(\alpha)]{\prod_{m=1}^{\frac{N-1}{2}}(\frac{2^{m}}{2^{m}-1})^{\frac{\tbinom{N}{m}}{2}}}\sqrt{2}|\alpha\beta|, $\textit{N} is odd$,\\
%\sqrt{\frac{2^{\frac{N+2}{2}}}{2^{\frac{N}{2}}-1}}\sqrt[d(\alpha)]{\prod_{m=1}^{\frac{N-2}{2}}(\frac{2^{m}}{2^{m}-1})^{\frac{\tbinom{N}{m}}{2}}}|\alpha\beta|, $\textit{N} is even$,\\
%\end{cases}
%\end{equation*}
%which is equal to the $l_1$-norm coherence of $\rho$ multiplied by a constant, which denoted by $\widetilde{C_{l_1}}$.

%For mixed state, by noticing the fact from the proof of Theorem 2 that any pure state $|\Phi\rangle$ in the support of $\rho^{'}$ is the special X-state ($|\Phi\rangle=\sum_{i}t_{i}|i...i\rangle$ with $\sum_{i}|t_{i}|^{2}=1$) and $E^{f}_{\gamma}(|\Phi\rangle)=C^{f}(|\Phi\rangle)$ for any $\gamma$, we have $E^{f}_{\gamma}(\rho')=C^{f}(\rho')=C^{f}(\rho)$. Furthermore, we get $G(\rho^{'})=\widetilde{C_{l1}}(\rho)$ by taking $f(p)=\sqrt{\frac{d_{min}}{d_{min}-1}\sum_{i\neq j}p_{i}p_{j}}$, which complete the proof.
%\end{proof}

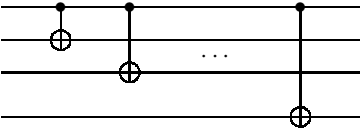
\begin{figure}
  \centerline{
  \Qcircuit @C=2em @R=.5em {& \ctrl{1} & \ctrl{2} & \cds{4}{\cdots} & \ctrl{4} & \qw\\& \targ & \qw &\qw & \qw & \qw \\& \qw & \targ & \qw & \qw & \qw \\
  & & & & & \\& \qw & \qw & \qw & \targ & \qw \\}
  }
  \caption{The quantum circuit of $U_{Nq}$}\label{Fig.2}
\end{figure}

%\begin{proposition}
%For any three-qubit (mix or pure) state $\rho$, the concurrence fill of $\rho'$ is equal to the fourth power of $l_1$-norm coherence of $\rho$.
%\end{proposition}

%\begin{proof}
%The concurrence fill is defined as
%\begin{equation*}
%F_{123}=\frac{4}{\sqrt{3}}\sqrt{Q(Q-C_{1(23)}^{2})(Q-C_{2(13)}^{2})(Q-C_{3(12)}^{2})},
%\end{equation*}
%where $Q=\frac{1}{2}(C_{1(23)}^{2}+C_{2(13)}^{2}+C_{3(12)}^{2})$ and $C_{i(jk)}$ represents the concurrence of the bipartite system $i$ and $jk$. Since $\rho'$ is the special X-state, we have
%\begin{equation*}
%C_{1(23)}(\rho')=C_{2(13)}(\rho')=C_{3(12)}(\rho').
%\end{equation*}
%Furthermore, one can calculate
%\begin{equation*}
%F_{123}(\rho')=(\sum_{i\neq j}|\rho_{ij}|)^{4}=C_{l_1}^{4}(\rho),
%\end{equation*}
%which completes the proof.
%\end{proof}

\section{Connection Between Coherence And Genuine Quantum Correlations}
A system state can be transformed into an $N$-partite state which
has the form $\rho^{'}=\sum_{i,j}s_{ij}|ii...i\rangle\langle jj...j|$ with $\sum_{i}s_{ii}=1$  via the process introduced in Section \Rmnum{3}, and its coherence is equal to the GME of the corresponding state $\rho'$. Here we will explore the connection between the coherence of the system state and other genuine quantum correlations of the special X-state, which include the genuine multipartite steering and genuine multipartite nonlocality.
Let us consider any qubit state,
\begin{equation}\label{eq:EfSystemState}
	\rho^{I}=p|0\rangle\langle 0|+r|0\rangle\langle 1|+r|1\rangle\langle 0|+(1-p)|1\rangle\langle 1|,
\end{equation}
where $p\in[0,1]$ and $r=|r|e^{i\theta}$ satisfying $|r|^{2}\leq p(1-p)$ ensuring that the density matrix
$\rho^I$ remains semi-positive definite.
By employing the local unitary operations $[1,0;0,e^{\rm{i}\theta}]$ on $\rho^I,$ we can obtain a real quantum state as
\begin{equation}\label{eq:EfSystemState}
\rho=p|0\rangle\langle 0|+|r||0\rangle\langle 1|+|r||1\rangle\langle 0|+(1-p)|1\rangle\langle 1|,
\end{equation}

A class of three-qubit X-state can be obtained by performing the converting incoherent operation $U$ in Sec. \Rmnum{3} on { the coupled system} $\rho\otimes |0\rangle\langle 0|^{\otimes 2}$
\begin{equation}\label{eq:EfGHZType}
\begin{aligned}
\rho'=&p|000\rangle\langle 000|+|r|(|000\rangle\langle 111|+|111\rangle\langle 000|) \\
&+(1-p)|111\rangle\langle 111|,
\end{aligned}
\end{equation}

It is known that an $N$-partite state $\rho_{A_1\cdots A_N}$ is genuine multipartite nonlocal if there exsit measurements $M_{x_i}$ performed on the parties $A_i$ with $x_i\in\{0,1\}$ and outcomes $c_i\in \{0,1\}$ such that the
probabilities $p(c_1\cdots c_N|M_{x_1}\cdots M_{x_N})=\rm{Tr}[\mathit{\rho_{A_1\cdots A_N}\bigotimes\limits_{i=1}^N M_{x_i}^{c_i}}]$ do not admit the hybrid local hidden variable (LHV) model as follows
\cite{GS,DNSD,MG,JNNY}
\begin{equation}\label{eq:hlhv}
\begin{aligned}
&&p(c_1\cdots c_N|M_{x_1}\cdots M_{x_N})~~~~~~~~~~~~~~\\ \nonumber
&&=\sum\limits_{\alpha|\bar{\alpha}}\sum\limits_{\lambda} p(\lambda)p(c_{\alpha}|M_{x_{\alpha}},\lambda)p(c_{\bar{\alpha}}|M_{x_{\bar{\alpha}}},\lambda),
\end{aligned}
\end{equation}
where  $\alpha=\{i_1,\cdots, i_k\}$ is a nonempty subset of $I=\{1,2,\cdots,N\}$ and $\bar{\alpha}=I\setminus \alpha$ ,  $c_{\alpha}=\{c_{i_1}\cdots c_{i_k}\},$ and $c_{\bar{\alpha}}=\{c_{1}\cdots c_{N}\}\setminus C_{\alpha},$ $M_{x_{\alpha}}=\{M_{x_{i_1}},\cdots, M_{x_{i_k}}\}$ and $M_{x_{\bar{\alpha}}}=\{M_{x_{1}},\cdots, M_{x_{N}}\}\setminus M_{\alpha}$.
In \cite{QSCC}, a Bell-type inequality has been introduced, the violation of which signifies the presence of genuine multipartite nonlocal correlations,
\begin{equation}\label{eq:hardytest}
\begin{aligned}
&p(0_{I}|M_0,\cdots, M_0)\\ \nonumber
&-\sum_{k\in I\setminus \{k'\}}p(1_{k'}1_{k}0_{\overline{k'k}}|M_{x_{k'}=1}M_{x_k=1}M_{x_{\overline{k'k}}=0}) \\
&-\sum_{k\in I}p(0_{I}|M_{x_k=0}M_{x_{\overline{k}}=1})
\leq 0,
\end{aligned}
\end{equation}
where the $k$-th local observer measures two alternative observables ${M_0,M_1}$ with two outcomes labeled by ${0,1}$, $\overline{k}=I\setminus\{k\}, \overline{k'k}=I\setminus \{k,k'\}$, and $k'\in I$ is fixed as one.
For example, in three-qubit case, to violate the above inequality, one need to find observables  $M_{x_i}$ for each particle $i$ such that
\begin{equation}\label{eq:EfHardy2}
\begin{aligned}
&&H=p(000|M_0M_0M_0)-p(100|M_1M_0M_0)\\
&&-p(010|M_0M_1M_0)-p(001|M_0M_0M_1)\\&&-p(110|M_1M_1M_0)-p(101|M_1M_0M_1)>0,
\end{aligned}
\end{equation}
Eq. (\ref{eq:EfHardy2}) indicates the genuine multipartite nonlocality for three-qubit states.

Based on the aforementioned approach, we deduce that the maximum value of  $H$ in Eq. (\ref{eq:EfHardy2}) is positive for the state expressed in Eq.(\ref{eq:EfGHZType}) if and only if $|r|>0$. The detailed calculation is provided in Appendix D. The findings are visually represented in Fig. \ref{Fig.3}.

%The left panel shows the maximal value of Eq. (\ref{eq:EfHardy2}) versus $r$ with $0\leq p\leq0.5$ and $0\leq r\leq\sqrt{p(1-p)}$, the right panel shows the result when $p=0.5$.

\begin{figure}[htbp]
	\centering
	\centering
	\includegraphics[width=4cm,height=3cm]{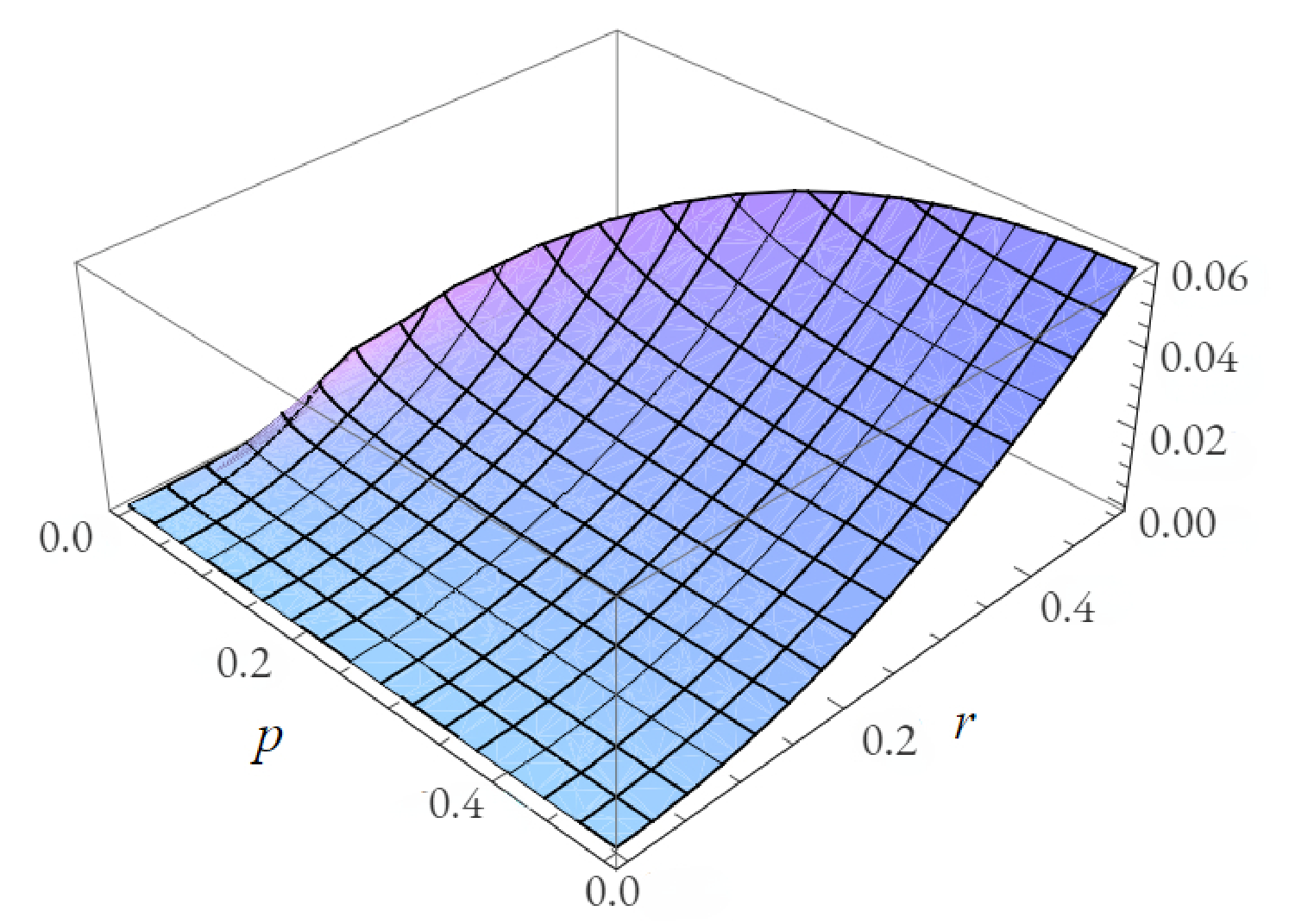}
	\includegraphics[width=4cm,height=3cm]{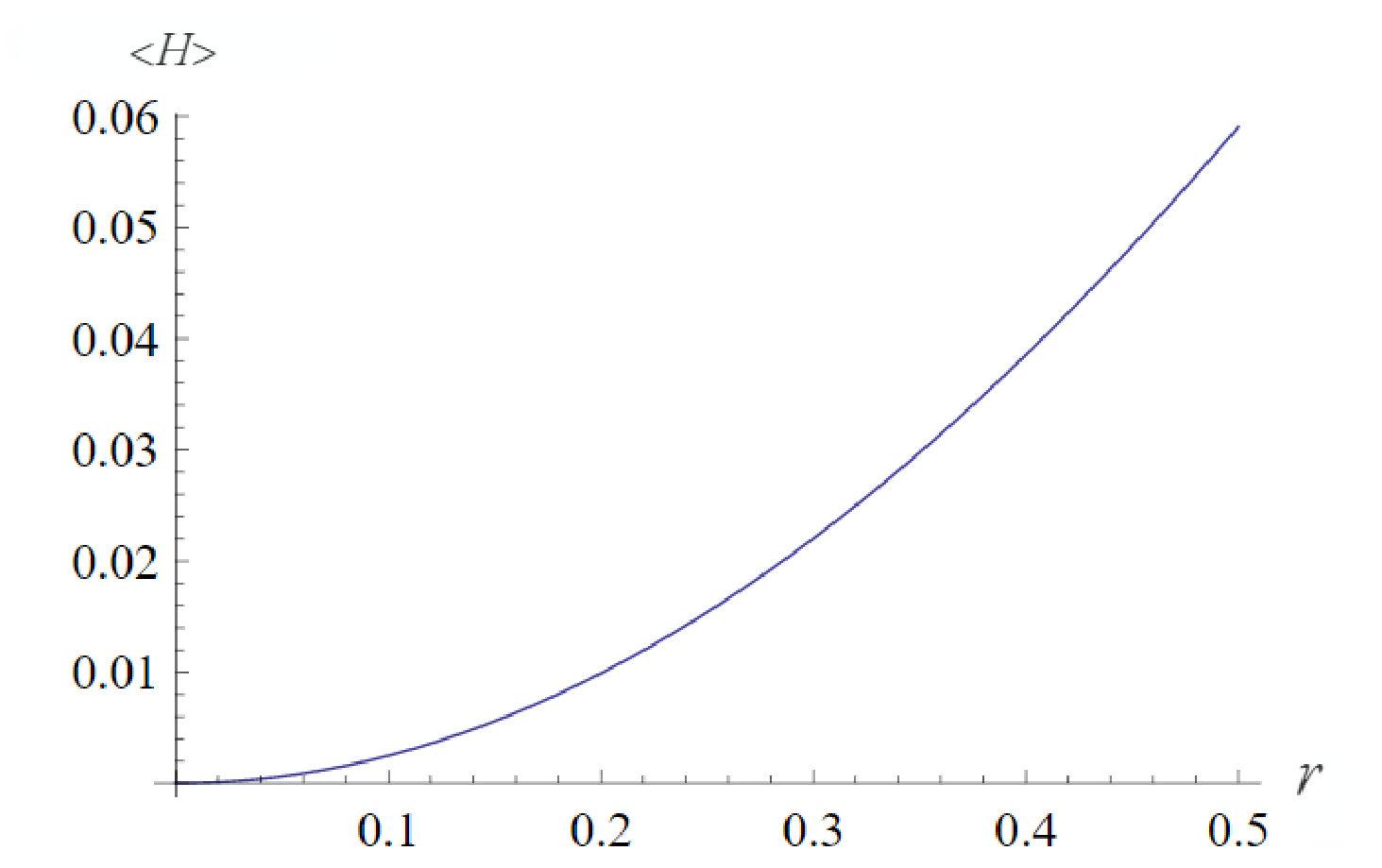}
	\caption{The left panel illustrates the maximum value of $H$ as a function of $p$ and $r$
for the parameter range $0\leq p\leq0.5$ and $0\leq |r|\leq\sqrt{p(1-p)}$. The right panel displays the maximum value of $H$ with respect to $r$, specifically when
$p$ is fixed at 0.5.}
	\label{Fig.3}
\end{figure}

Based on the above result, we obtain the following theorem.
\begin{theorem}
The special X-states in Eq. (\ref{eq:EfGHZType}) are genuine tripartite nonlocal if and only if the corresponding states in Eq. (\ref{eq:EfSystemState}) have nonvanishing coherence.
\end{theorem}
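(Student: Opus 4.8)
The plan is to reduce the biconditional to the single-parameter computation already promised in Appendix D. First I would note that the state $\rho$ in Eq.~(\ref{eq:EfSystemState}) has nonvanishing coherence (in the reference basis $\{|0\rangle,|1\rangle\}$) precisely when $|r|>0$, since $C_{l_1}(\rho)=2|r|$ and, more to the point, every convex-roof coherence monotone $C^f$ with $f\in\scfp$ vanishes iff $\rho$ is diagonal iff $|r|=0$. Similarly, the original state $\rho^I$ of Eq.~(\ref{eq:EfSystemState}) before the phase rotation has coherence iff its (complex) $r=|r|e^{i\theta}$ is nonzero, and the local unitary $\mathrm{diag}(1,e^{i\theta})$ neither creates nor destroys coherence; so ``the corresponding state has nonvanishing coherence'' is equivalent to $|r|>0$.

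Next I would invoke the construction of Sec.~\Rmnum{3}: applying the unitary incoherent operation $U=|0\rangle\langle 0|\otimes I^{\otimes 2}+|1\rangle\langle 1|\otimes\sigma_x^{\otimes 2}$ to $\rho\otimes|0\rangle\langle 0|^{\otimes 2}$ yields exactly the three-qubit X-state $\rho'$ of Eq.~(\ref{eq:EfGHZType}). So the theorem's two sides are: (i) $\rho'$ is genuine tripartite nonlocal, and (ii) $|r|>0$. The core analytic input is the claim stated just above the theorem, namely that $\max H$ over measurement settings for the state $\rho'$ is strictly positive iff $|r|>0$, whose proof is deferred to Appendix D. Granting that, the argument is: if $|r|>0$ then $\max H>0$, and violating the Bell-type inequality Eq.~(\ref{eq:EfHardy2}) certifies genuine tripartite nonlocality, so $\rho'$ is GMNL; conversely if $|r|=0$ then $\rho'=p|000\rangle\langle000|+(1-p)|111\rangle\langle111|$ is a separable (indeed fully classical) state, hence admits a fully local hidden-variable model and a fortiori the hybrid LHV model of Eq.~(\ref{eq:hlhv}), so it is \emph{not} genuine tripartite nonlocal. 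This closes the biconditional.

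The step I expect to carry the real weight is the ``if'' direction routed through Appendix D: one must actually exhibit, for each $|r|>0$ (and each admissible $p$), a choice of two-outcome observables $M_0,M_1$ on each qubit making the Hardy-type quantity $H$ in Eq.~(\ref{eq:EfHardy2}) strictly positive, and conversely argue that when $|r|=0$ no settings can make $H$ positive — the latter being automatic since the state is classically correlated, but the former requiring a genuine optimization. The natural approach there is to use the GHZ-like structure of $\rho'$: parametrize $M_{x}$ by points on the Bloch sphere, exploit the permutation symmetry among the three qubits to reduce the number of free angles, and show the resulting expression for $H(p,|r|,\text{angles})$ has a positive maximum whenever $|r|>0$, e.g. by a perturbative argument around the pure GHZ point $p=1/2$ where Hardy-type nonlocality of $\alpha|000\rangle+\beta|111\rangle$ is classical, then extending by continuity/monotonicity in $|r|$. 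Everything else — the coherence-vanishing criterion, local-unitary invariance of coherence, and the fact that classical states satisfy the hybrid LHV model — is routine and can be stated in a line or two.
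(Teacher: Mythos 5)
Your proposal is correct and, for the substantive ``if'' direction, follows the same route as the paper: reduce the theorem to the claim that $\max H>0$ for the state in Eq.~(\ref{eq:EfGHZType}) exactly when $|r|>0$, defer that optimization to Appendix~D (the paper does it by parametrizing real measurement directions with four angles and numerically maximizing Eq.~(\ref{eq:EfHardyValue}) via \emph{FindMaximum}, rather than by the symmetry/perturbation argument you sketch), and combine it with the observation that nonvanishing coherence of the state in Eq.~(\ref{eq:EfSystemState}) is equivalent to $|r|>0$. Where you genuinely diverge --- and improve on the paper --- is the ``only if'' direction. The paper's proof concludes from ``$\max H>0$ iff $|r|>0$'' alone, but the Hardy-type inequality Eq.~(\ref{eq:hardytest}) is only a \emph{sufficient} witness of genuine multipartite nonlocality: $\max H\le 0$ at $|r|=0$ does not by itself exclude GMNL. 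Your separate argument that at $|r|=0$ the state $p|000\rangle\langle 000|+(1-p)|111\rangle\langle 111|$ is fully separable, hence admits a fully local model and a fortiori the hybrid LHV model of Eq.~(\ref{eq:hlhv}), is exactly what is needed to close the biconditional, and it buys a logically complete converse at essentially no extra cost. The remaining caveat applies equally to both treatments: the positivity of $\max H$ for all admissible $p$ and all $|r|>0$ rests on a numerical optimization (Fig.~\ref{Fig.3}), not an analytic proof, so your suggestion of a perturbative/continuity argument in $|r|$ would, if carried out, actually strengthen the paper's Appendix~D rather than merely reproduce it.
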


\begin{proof}
On the one hand, according to the definition of coherence measure introduced in Eq. (\ref{eq:Cfmix}), $|r|>0$ implies that the coherence of the real system state in Eq. (\ref{eq:EfSystemState}) is nonvanishing.
On the other hand, $H>0$ for the state Eq. (\ref{eq:EfGHZType}) means that the state Eq. (\ref{eq:EfGHZType}) violates the Bell inequality Eq. (\ref{eq:hardytest}).
With the above result depicted in Fig. \ref{Fig.3}, we complete the proof.
\end{proof}

Genuine multipartite steering (GMS) is a concept introduced in \cite{QM}, representing the insufficiency of the directed hybrid local hidden variable and local hidden state (LHV-LHS) model.
For example, for three particles $I=\{1,2,3\},$ we have three bipartitions $\{1|23,2|13,3|12\}$ and the set $\mathcal{P}=\{\{1\}, \{2\}, \{3\}, \{1,2\}, \{1,3\},\{2,3\}\},$ thus the genuine multipartite steering is just the failure of
the hybrid LHV-LHS model
\begin{equation}
\begin{aligned}
&&p(c_1c_2c_3|M_1M_2M_3)~~~~~~~~~~~~~~~~~~~~~~~~~\\ \nonumber
&&=\sum\limits_{\alpha\in\mathcal{P}}\sum\limits_{\lambda}p(\lambda)p(c_{\alpha}|M_{\alpha},\lambda)p_Q(c_{\bar{\alpha}}|M_{\bar{\alpha}},\lambda)
\end{aligned}
\end{equation}
with $p_Q(C_{\bar{\alpha}}|M_{\bar{\alpha}},\lambda)=\rm{Tr}[\tau_{\lambda}\textit{M}_{\textit{x}_{\bar{\alpha}}}^{\textit{C}_{\alpha}}]$ and $\tau_{\alpha}$ the local hidden state.
As demonstrated in \cite{ZRS}, the authors have successfully established that GMS occupies an intermediate position between genuine multipartite entanglement (GME) and genuine multipartite nonlocality (GMNL).

For the special X-states given in Eq. (\ref{eq:EfGHZType}), their GME is equivalent to the coherence of the corresponding system states in Eq. (\ref{eq:EfSystemState}) according to Theorem 3.
With the result of Lemma 1, we obtain that any mixed state Eq. (\ref{eq:EfGHZType}) is genuine tripartite nonlocal iff its GME is nonzero. More specifically, the GME and GMNL is equivalent for the special X-states in Eq. (\ref{eq:EfGHZType}).
Moreover, leveraging the findings from Theorem 2 in \cite{ZRS}, we establish that for the special X-states, as defined by Eq. (\ref{eq:EfGHZType}), GME, GMS and GMNL are indeed equivalent. Consequently, it can be inferred that a state, as described by Eq. (\ref{eq:EfGHZType}), is genuinely tripartite steerable if the corresponding system state in Eq. (\ref{eq:EfSystemState}) has nonvanishing coherence, and vice versa.
The aforementioned discussion can be encapsulated in the following theorem,
\begin{theorem}
The special X-states in Eq. (\ref{eq:EfGHZType}) are genuine tripartite steerable if and only if the corresponding system states in Eq. (\ref{eq:EfSystemState}) have nonvanishing coherence.
\end{theorem}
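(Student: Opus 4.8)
The plan is to assemble the final claim from three ingredients already available: (i) Theorem 3, which equates the GME measures $E^f_{\mathrm{GME}}(\rho')=G^f_{\mathrm{GME}}(\rho')$ with the coherence $C^f(\rho)$ of the system state in \eref{eq:EfSystemState}; (ii) Lemma 1 (the Hardy-type calculation of Appendix D), which shows that the special three-qubit X-state \eref{eq:EfGHZType} is genuine tripartite nonlocal precisely when $|r|>0$; and (iii) Theorem 2 of \cite{ZRS}, which orders GME, GMS and GMNL. First I would observe that for the particular one-parameter family \eref{eq:EfGHZType} nonvanishing coherence of \eref{eq:EfSystemState} is equivalent to $|r|>0$, which by the definition in \eref{eq:Cfmix} (applied to the qubit state) is in turn equivalent to $C^f(\rho)>0$ for any $f\in\scfp$; hence by Theorem 3 it is equivalent to $E^f_{\mathrm{GME}}(\rho')>0$, i.e. to $\rho'$ being genuinely tripartite entangled.

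Next I would use the generic chain $\mathrm{GMNL}\Rightarrow\mathrm{GMS}\Rightarrow\mathrm{GME}$ (the hybrid LHV model is a special case of the hybrid LHV--LHS model, which in turn rules out biseparability), so that genuine tripartite steerability of \eref{eq:EfGHZType} always implies its genuine tripartite entanglement, giving the ``only if'' direction essentially for free once GME is identified with nonvanishing coherence. For the ``if'' direction I would go the other way around the triangle: assuming the corresponding system state has nonzero coherence, Lemma 1 yields that \eref{eq:EfGHZType} is genuine tripartite \emph{nonlocal}, and since $\mathrm{GMNL}\Rightarrow\mathrm{GMS}$ this forces genuine tripartite steerability. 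Thus the coherence condition pins down all three properties simultaneously for this family, and in particular GMS.

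Alternatively, and perhaps more cleanly, I would invoke Theorem 2 of \cite{ZRS} directly on the family \eref{eq:EfGHZType}: that result shows GME, GMS and GMNL coincide on exactly this kind of special X-state, so it suffices to combine it with Theorem 3 (GME $\Leftrightarrow$ nonzero coherence) to conclude. Writing $|r|=0$ makes $\rho'$ a mixture of product states $|000\rangle$ and $|111\rangle$, manifestly biseparable and hence unsteerable, handling the converse trivially.

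The main obstacle I anticipate is not in the logical assembly but in justifying that the hypotheses of Theorem 2 of \cite{ZRS} genuinely apply to the states \eref{eq:EfGHZType} — i.e. checking that this GHZ-type X-state falls in the class for which that paper proves the equivalence of GME, GMS and GMNL, rather than merely in the class where they prove one inclusion. If that theorem's scope does not literally cover \eref{eq:EfGHZType}, the fallback is the explicit route above: Theorem 3 for GME, Lemma 1 for GMNL, and the elementary implication $\mathrm{GMNL}\Rightarrow\mathrm{GMS}\Rightarrow\mathrm{GME}$ to squeeze GMS between two conditions that are already known to be equivalent to nonvanishing coherence. A secondary, minor point to be careful about is that coherence is basis-dependent: one must note that the local unitary $\mathrm{diag}(1,e^{\mathrm{i}\theta})$ relating $\rho^I$ to $\rho$ preserves both the coherence (it is a diagonal incoherent unitary) and all three genuine correlations, so the statement is insensitive to the phase of $r$ and only $|r|$ matters.
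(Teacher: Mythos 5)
Your proposal is correct and follows essentially the same route as the paper: the paper also sandwiches GMS between GMNL and GME via the hierarchy of Ref.~\cite{ZRS} (its Theorem 2), using Theorem 3 to identify GME of the X-state with the coherence of the qubit state and the Appendix D Hardy-type calculation to get GMNL iff $|r|>0$, hence the equivalence of GME, GMS, GMNL and nonvanishing coherence for this family. Your worry about the scope of Theorem 2 of \cite{ZRS} is resolved exactly by your stated fallback, which is in fact the paper's argument: that theorem is only used for the implications $\mathrm{GMNL}\Rightarrow\mathrm{GMS}\Rightarrow\mathrm{GME}$, and the equivalence on the states of Eq.~(\ref{eq:EfGHZType}) is then deduced, not imported.
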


The preceding two theorems suggest that coherence can be operationally transformed into GMS and GMNL within the framework described in Section \Rmnum{3}. Thus  the GME, GMS, GMNL and coherence are equivalent for mixed states in Eq. (\ref{eq:EfGHZType}), which is shown in Fig. \ref{Fig.4}.

\begin{figure}
	\centering
	\includegraphics[width=0.7\linewidth]{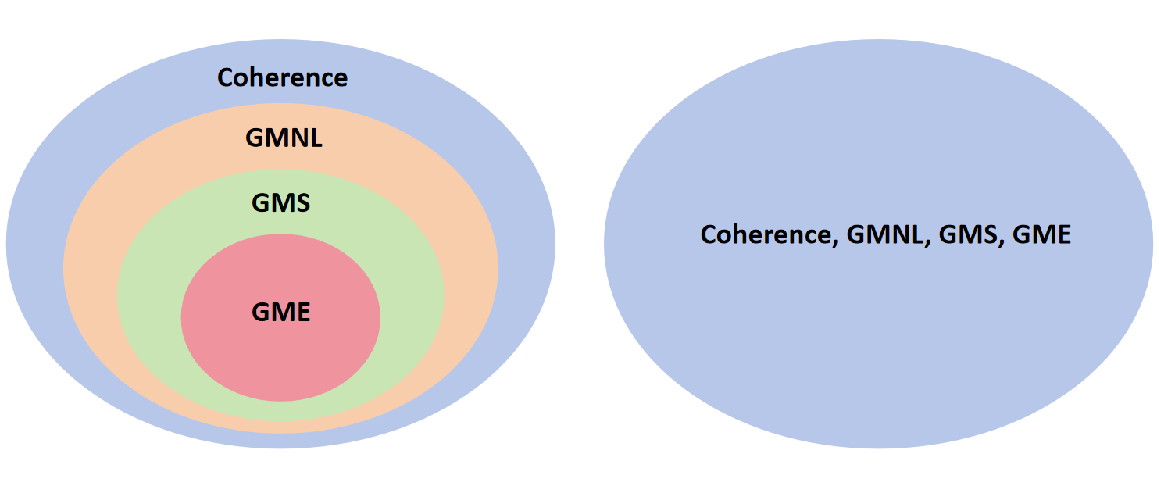}
	\caption{The interconnection among coherence, GMNL, GMS and GME for all quantum states (the states in Eq. (\ref{eq:EfGHZType})) is illustrated by the left (right) figure.}
	\label{Fig.4}
\end{figure}

\section{Summary}
We have constructed two types of genuine multipartite entanglement measures and have established the equivalence between these two types of genuine multipartite entanglement measures and coherence measures. Any coherence measure defined utilizing  symmetric concave functions for pure states with extension to mixed states by convex roof is equal to the genuine multipartite entanglement generated by a unitary incoherent operation on the coupling system including the initial system and an incoherent ancilla, which indicates that genuine multipartite entanglement can be generated from the coherence of initial state with unitary incoherent operation. Furthermore, we have investigated the operational connection between coherence and other quantum correlations such as genuine multipartite steering and genuine multipartite nonlocality. Any qubit quantum states can be transformed into a class of special X-states, via the unitary incoherent operation.
For these three qubit states, genuine multipartite nonlocality, genuine multipartite steering and genuine multipartite entanglement are equivalent. Moreover, these genuine multipartite correlations exist if and only if the coherence of the corresponding system state is present.

\section{Acknowledgments}
Zhihao Ma thanks for discussion with Qiongyi He, Huangjun Zhu and Zhengda Li. Zhihao Ma acknowledge the Fundamental Research Funds for the Central Universities, and Ming Li acknowledge the Fundamental Research Funds for the Central Universities (No. 22CX03005A). This work is supported by National Natural Science Foundation(NSFC) of China (Grant No. 12271325, 12071179, 12125402, 12371132,12075159,12171044); Shandong Provincial Natural Science Foundation for Quantum Science No. ZR2021LLZ002; Shenzhen Institute for Quantum Science and Engineering, Southern University of Science and Technology (Grant Nos. SIQSE202005); The specific research fund of the Innovation Platform for Academicians of Hainan Province.

\section{APPENDIX A: Proof Of Theorem 1}

$\textit{Theorem}$ 1.
For any  $f\in\scf$, $E^{f}_\gme$ and $G^{f}_\gme$ are both entanglement monotones. If in addition $f\in\scfp$, then $E^{f}_\gme$ and $G^{f}_\gme$ are both GME measures.

\begin{proof} 
To prove  $E^{f}_\gme$ and $G^{f}_\gme$ are both entanglement monotones for any $f\in\scf$, we need to prove the two quantities satisfy conditions (C2-C4). 
By the nature of the convex roof construction, $E^{f}_\gme(\rho)$ and $G^{f}_\gme(\rho)$ are convex. In addition, to show that both of them do not increase under LOCC, we may assume that the initial state $\rho$ is pure. Then $E^{f}_\gme(\rho)=E^f_{\gamma_0} (\rho)$ for some partition $\gamma_0$. Suppose a given LOCC transforms  $\rho$ to $\rho_j$ with probability $p_j$. Then
\begin{equation*}
	\sum_j p_j E^{f}_\gme(\rho_j)\leq \sum_j p_j E^{f}_{\gamma_0}(\rho_j)\leq E^{f}_{\gamma_0}(\rho)= E^{f}_\gme(\rho),
\end{equation*}
where the first equation follows from the definition of $E^{f}_\gme$ and the second from the monotonicity of $E^{f}_{\gamma_0}$ under selective operations \cite{HZ,GV}.
In addition, by using the concavity of geometric mean function and Mahler's inequality, we can get
\begin{equation*}
	\begin{aligned}
		G^{f}_\gme(\rho)&:=[\prod_{\gamma}E^{f}_{\gamma}(\rho)]^{\frac{1}{c(\alpha)}}\geq [\prod_{\gamma}\sum_{j}p_{j}E^{f}_{\gamma}(\rho_{j})]^{\frac{1}{c(\alpha)}}\\
		&\geq\sum_{j}p_{j}[\prod_{\gamma}E^{f}_{\gamma}(\rho_{j})]^{\frac{1}{c(\alpha)}}=\sum_{j}p_{j}G^{f}_\gme(\rho_{j}).
	\end{aligned}
\end{equation*}
Therefore, $E^{f}_\gme$ and $G^{f}_\gme$ both satisfy conditions (C2) and (C4); condition (C3) is a consequence of the two conditions.
	
Due to $f\in\scfp$, it is straightforward to verify that  $E^{f}_\gme$ and $G^{f}_\gme$ vanish when $\rho$ is biseparable and both are positive when $\rho$ is genuine multipartite entangled. Therefore, $E^{f}_\gme$ and $G^{f}_\gme$ are both GME measures.
\end{proof}

\section{APPENDIX B: Proof Of Theorem 2}
$\textit{Theorem}$ 2.
For an arbitrary $N$-partite mixed state $\rho$, one can prove the following inequality,
\begin{equation}\label{}
  G_{\mathrm{GME}}^f(\rho)\geq E_{\mathrm{GME}}^f(\rho).
\end{equation}

\begin{proof}
We can prove  $G_{\mathrm{GME}}^f(\rho)\geq E_{\mathrm{GME}}^f(\rho)$ by proving that for any state $|\psi\rangle_i$ in the optimal pure state decompositions of $\rho$ with respect to $G_{\mathrm{GME}}(\rho)$, we have $G_{\mathrm{GME}}^f(|\psi_i\rangle)\geq E_{\mathrm{GME}}^f(|\psi_i\rangle)$.

Firstly suppose that we have found the optimal pure state ensembles $\{q_i, |\psi_i\rangle\}$ for $G_{\mathrm{GME}}^f(\rho)$, i.e.,
\begin{eqnarray}
		% \nonumber to remove numbering (before each equation)
	\begin{aligned}
		G_{\mathrm{GME}}^f(\rho)&=\min_{\{p_j,|\phi_j\rangle\}}\sum_j p_j G_{\mathrm{GME}}^f(|\phi_j\rangle)\\
		&=\sum_i q_i G_{\mathrm{GME}}^f(|\psi_i\rangle).
	\end{aligned}
\end{eqnarray}
Now, for each pure state $|\psi_i\rangle$, one has
\begin{equation}\label{eq:Efgm}
	\begin{aligned}
		G_{\mathrm{GME}}^f(|\psi_i\rangle)&=\sqrt[c(\alpha)]{\Pi_\gamma E_\gamma^f(|\psi_i\rangle)}\geq \min_\gamma E_\gamma^f(|\psi_i\rangle)\\
		&=E_{\mathrm{GME}}^f(|\psi_i\rangle),
	\end{aligned}
\end{equation}
The above inequality holds because for arbitrary $\gamma$ we have $E_\gamma^f(|\psi_i\rangle)\geq\min_\gamma E_\gamma^f(|\psi_i\rangle)\geq0$. 
 
Then suppose that we have found the optimal pure state ensembles $\{q'_{k}, |\psi'_k\rangle\}$ for $E_{\mathrm{GME}}^f(\rho)$, i.e.,
\begin{eqnarray}
		% \nonumber to remove numbering (before each equation)
		\begin{aligned}
		E_{\mathrm{GME}}^f(\rho)&=\min_{\{p_j,|\phi_j\rangle\}}\sum_j p_j E_{\mathrm{GME}}^f(|\phi_j\rangle)\\
		&=\sum_k q'_k E_{\mathrm{GME}}^f(|\psi'_k\rangle).
	\end{aligned}
\end{eqnarray}
Therefore,
\begin{eqnarray}\label{eq:Efgmt}
	\begin{aligned}
		G_{\mathrm{GME}}^f(\rho)&=\sum_i q_i G_{\mathrm{GME}}^f(|\psi_i\rangle)\geq \sum_i q_i E_{\mathrm{GME}}^f(|\psi_i\rangle)\\
		&\geq  \sum_k q'_k E_{\mathrm{GME}}^f(|\psi'_k\rangle)=E_{\mathrm{GME}}^f(\rho),
	\end{aligned}
\end{eqnarray}
where we have used Eq.(\ref{eq:Efgm}) for the first inequality, and the fact that $\min_{\{p_j,|\phi_j\rangle\}}\sum_j p_j E_{\mathrm{GME}}^f(|\phi_j\rangle)= \sum_k q'_k E_{\mathrm{GME}}^f(|\psi'_k\rangle)$ for the second inequality.
\end{proof}

\section{APPENDIX C: Proof Of Theorem 3}
$\textit{Theorem}$ 3.
For any $f\in\scfp$, we have
\begin{equation*}
E^{f}_{GME}(\rho')=G^{f}_{GME}(\rho')=C^{f}(\rho).
\end{equation*}

\begin{proof}
Firstly, we prove the equality holds for pure state, then we can prove the equality for any quantum state by using the definitions of $E^{f}_{GME}(\rho')$ and $G^{f}_{GME}(\rho')$, and one-to-one mapping between the ensembles of $\rho$ and that of $\rho'$.

When $\rho$ is pure, say $\rho=|\psi\rangle\langle\psi|$ with $|\psi\rangle=\sum_{j}c_{j}|j\rangle$. We can easily get $\rho'=|\psi'\rangle\langle\psi'|$ with $|\psi'\rangle=\sum_{j}c_{j}|j_{A_{0}}j_{A_{1}}...j_{A_{N-1}}\rangle$, which is the special X-state, where ${A_{i}},i=1,2,...,N-1$ corresponds to the ancilla system. For any bipartition $\gamma|\bar{\gamma}$ of $\rho'$, the schmidt vector $\lambda_{\gamma}(\rho')=(|c_{0}|^{2}, |c_{1}|^{2},... |c_{d-1}|^{2})^{\rmT}$, which is equal to the coherence vector of $\rho$, thus for any $f\in\textit{F}_{sc}$, $f(\lambda_{\gamma}(\rho))=f(\mu(\ket{\psi}))$, i.e., $E^{f}_{GME}(\rho')=G^{f}_{GME}(\rho')=C^{f}(\rho)$.
	
When $\rho$ is a mixed state. Suppose $\rho=\sum_{i}p_{i}|\varphi_{i}\rangle\langle\varphi_{i}|$ is the optimal ensemble decomposition of $\rho$, i.e., $C^{f}(\rho)=\sum_{i}p_{i}C^{f}(|\varphi_{i}\rangle)$. Note that the isometry $U$ establishes a one-to-one mapping between the ensembles of $\rho$ and that of $\rho'$, i.e.,
\begin{equation*}
	|\Phi_{i}\rangle=U(|\varphi_{i}\rangle\otimes|0\rangle^{\otimes N-1}),
\end{equation*}
where $\rho'=\sum_{i}p_{i}|\Phi_{i}\rangle\langle\Phi_{i}|$. So we get
\begin{equation*}
	C^{f}(\rho)=\sum_{i}p_{i}E^{f}_{GME}(|\Phi_{i}\rangle)=\sum_{i}p_{i}G^{f}_{GME}(|\Phi_{i}\rangle).
\end{equation*}
According to the convex of entangled measure, we have
\begin{equation}\label{eq:EfleqCf}
	E^{f}_{GME}(\rho')\leq \sum_{i}p_{i}E^{f}_{GME}(|\Phi_{i}\rangle)=C^{f}(\rho),
\end{equation}
\begin{equation}\label{eq:GfleqCf}
	G^{f}_{GME}(\rho')\leq \sum_{i}p_{i}G^{f}_{GME}(|\Phi_{i}\rangle)=C^{f}(\rho).
\end{equation}
Similarly, we suppose $\rho'=\sum_{i}q_{i}|\Psi_{i}\rangle\langle\Psi_{i}|$ is the optimal ensemble decomposition of $\rho'$, i.e., $E^{f}_{GME}(\rho')=\sum_{i}q_{i}E^{f}_{GME}(|\Psi_{i}\rangle)=\sum_{i}q_{i}G^{f}_{GME}(|\Psi_{i}\rangle)=G^{f}_{GME}(\rho')$. Then we can get an ensemble decomposition of $\rho$ according to the isometry $U$, i.e.,
\begin{equation*}
	|\Psi_{i}\rangle=U(|\phi_{i}\rangle\otimes|0\rangle^{\otimes N-1}),
\end{equation*}
where $\rho=\sum_{i}q_{i}|\phi_{i}\rangle\langle\phi_{i}|$.
According to the convex of coherence measure, we have
\begin{equation}\label{eq:CfleqEfGf}
	\begin{split}
		C^{f}(\rho)\leq \sum_{i}q_{i}C^{f}(|\phi_{i}\rangle)=\sum_{i}q_{i}E^{f}(|\Psi_{i}\rangle)\\
		=E^{f}_{GME}(\rho')=G^{f}_{GME}(\rho')\\.
	\end{split}
\end{equation}
Thus, the proof is completed.
\end{proof}

\section{APPENDIX D: Calculation Of Maximal Value Of \textit{H} For the Special X-States}
In order to find the observables $M_{x_i}$ for each particle $i$ that maximize $H$ in the left hand of Eq. (\ref{eq:EfHardy2}), we let
%first we need to find the observables $M_{x_i}$ for each particle $i$ such that Eq. (\ref{eq:EfHardy2}) holds.  

\begin{equation*}
	\begin{aligned}
		M_{x_1=0}=|a_{1}\rangle\langle a_{1}|-|\bar{a}_{1}\rangle\langle \bar{a}_{1}|,
		M_{x_1=1}=|b_{1}\rangle\langle b_{1}|-|\bar{b}_{1}\rangle\langle \bar{b}_{1}|		
	\end{aligned}
\end{equation*}
and
\begin{equation*}
	\begin{aligned}
		M_{x_k=0}=|a\rangle\langle a|-|\bar{a}\rangle\langle \bar{a}|,
		M_{x_k=1}=|b\rangle\langle b|-|\bar{b}\rangle\langle \bar{b}|		
	\end{aligned}
\end{equation*}
for $k=2,3$, where

\begin{equation*}
	\begin{aligned}
		&|a_{1}\rangle=\cos\theta_{1}|0\rangle+\sin\theta_{1}|1\rangle, |b_{1}\rangle=\cos\theta_{2}|0\rangle+\sin\theta_{2}|1\rangle, \\
		&|a\rangle=\cos\theta_{3}|0\rangle+\sin\theta_{3}|1\rangle, |b\rangle=\cos\theta_{4}|0\rangle+\sin\theta_{4}|1\rangle, \\	
		&|\bar{a}_{1}\rangle=\sin\theta_{1}|0\rangle-\cos\theta_{1}|1\rangle, |\bar{b}_{1}\rangle=\sin\theta_{2}|0\rangle-\cos\theta_{2}|1\rangle,\\
		&|\bar{a}\rangle=\sin\theta_{3}|0\rangle-\cos\theta_{3}|1\rangle, |\bar{b}\rangle=\sin\theta_{4}|0\rangle-\cos\theta_{4}|1\rangle,\\
	\end{aligned}
\end{equation*}
where $\theta_{i}\in[0,\pi]$ for $i=1,2,3,4$.
Then one can calculate
\begin{eqnarray}\label{eq:EfHardyValue}
	\begin{aligned}
		H=&(p-1)\sin^{2}\theta_{3}[2\cos^{2}\theta_{2}\cos^{2}\theta_{4}+(\sin^{2}\theta_{2} \\
		&-\sin^{2}\theta_{1})\sin^{2}\theta_{3} +2\sin^{2}\theta_{1}\sin^{2}\theta_{4}]
		+p\cos^{2}\theta_{3} \\
		&[\cos^{2}\theta_{1}
		(\cos^{2}\theta_{3}-2\cos^{2}\theta_{4})
		-\cos^{2}\theta_{2}\cos^{2}\theta_{3} \\
		&-2\sin^{2}\theta_{2}\sin^{2}\theta_{4}]+r\sin2\theta_{3}
		[\cos\theta_{1}\sin\theta_{1}
		\\
		&(\cos\theta_{3}\sin\theta_{3}-\sin2\theta_{4})-\cos\theta_{2}\sin\theta_{2} \\
		&(\cos\theta_{3}\sin\theta_{3}+\sin2\theta_{4})].	
	\end{aligned}
\end{eqnarray}

Numerically, one can find the maximal value of Eq. (\ref{eq:EfHardyValue}) by using the attached Mathematica function FindMaximum for all eligible $p$ and $r$ to guarentee $\rho'$ in Eq. (\ref{eq:EfGHZType}) is a quantum state (that is, $\rho'$ is semi-positive and the trace of $\rho'$ is one), which means that $p$ and $r$ satisfy $0\leq p\leq 0.5$ and $0\leq |r|\leq\sqrt{p(1-p)}$. For example, when $p=0.5$ and $r=0.4$, the code for calculating the result of Eq. (\ref{eq:EfHardyValue}) is
\begin{equation*}
	\begin{aligned}
		&p=0.5; r=0.4; FindMaximum[ \\
		&pCos[\theta_{3}]^{2}(Cos[\theta_{1}]^{2}(Cos[\theta_{3}]^{2}-2Cos[\theta_{4}]^{2}) \\
		&-Cos[\theta_{2}]^{2}Cos[\theta_{3}]^{2}-2Sin[\theta_{2}]^{2}Sin[\theta_{4}]^{2}) \\
		&+(p-1)Sin[\theta_{3}]^{2}(Sin[\theta_{3}]^{2}(Sin[\theta_{2}]^{2}-Sin[\theta_{1}]^{2}) \\
		&+2Sin[\theta_{1}]^{2}Sin[\theta_{4}]^{2}+2Cos[\theta_{2}]^{2}Cos[\theta_{4}]^{2}) \\
		&+2rSin[\theta_{3}]Cos[\theta_{3}](Sin[\theta_{1}] Cos[\theta_{1}](Sin[\theta_{3}]
		Cos[\theta_{3}] \\
		&-2Sin[\theta_{4}]Cos[\theta_{4}])-Sin[\theta_{2}]Cos[\theta_{2}] (Sin[\theta_{3}]Cos[\theta_{3}] \\
		&+Sin[2\theta_{4}])), \\
		&\{\theta_{1}, \frac{\pi}{2}\}, \{\theta_{2}, \frac{\pi}{2}\}, \{\theta_{3}, \frac{3\pi}{4}\}, \{\theta_{4}, 0\}]
	\end{aligned}
\end{equation*}

With the calculation using Mathematica, we can obtain that the maximal value of $H$ for any state described in Eq. (\ref{eq:EfGHZType}) is positive if and only if $|r|>0$, as depicted in Fig. \ref{Fig.3}.


\begin{thebibliography}{99}
\bibitem{CG}C. H. Bennett, G. Brassard, C. Crepeau, R. Jozsa, A. Peres, and W. K. Wootters, Teleporting an unknown quantum state via dual classical and Einstein-Podolsky-Rosen channels, Phys. Rev. Lett. \textbf{70}, 1895 (1993).	

\bibitem{RPMK}R. Horodecki, P. Horodecki, M. Horodecki, K. Horodecki, Quantum entanglement, Rev. Mod. Phys.  \textbf{81}, 865 (2009).

\bibitem{Actmetro}R. Tr\'enyi, \'A Luk\'acs, P. Horodecki, R. Horodecki, T. V\'ertesi and G. T\'oth, Activation of metrologically useful genuine multipartite entanglement,  New J. Phys. 26 023034 (2024).

\bibitem{AK}A. K. Ekert, Quantum cryptography based on Bell's theorem, Phys. Rev. Lett. \textbf{67}, 661 (1991).

\bibitem{RH}R. Raussendorf and H. J. Briegel, A One-Way Quantum Computer, Phys. Rev. Lett. \textbf{86}, 5188 (2001).

\bibitem{ZH}Z. H. Ma, Z. H. Chen, J. L. Chen, C. Spengler, A. Gabriel, and M. Huber, Measure of genuine multipartite entanglement with computable lower bounds, Phys. Rev. A. \textbf{83}, 062325 (2011).

\bibitem{Walter}M. Walter, B. Doran, D. Gross, M. Christandl, Entanglement polytopes: multiparticle entanglement from single-particle information. Science \textbf{340}, 1205 (2013).

\bibitem{convexroofmea}G. T\'oth, T. Moroder, and O. G\"uhne, Evaluating Convex Roof Entanglement Measures, Phys. Rev. Lett. 114, 160501 (2015).

\bibitem{Dai}Y. Dai, Y. Dong, Z. Y. Xu, W. L. You, C. J. Zhang, O. G\"uhne, Experimentally Accessible Lower Bounds for Genuine Multipartite Entanglement and Coherence Measures, Phys. Rev. Applied \textbf{13}, 054022 (2020).

\bibitem{Guo20}Y. Guo, L. Zhang, Multipartite entanglement measure and complete monogamy relation, Phys. Rev. A \textbf{101}, 032301 (2020).

\bibitem{Schneeloch}J. Schneeloch, C. C. Tison, M. L. Fanto, S. Ray, P. M. Alsing, Quantifying tripartite entanglement with entropic correlations, Phys. Rev. Research. \textbf{2}, 043152 (2020).


\bibitem{SJ}S. B. Xie and J. H. Eberly, Triangle Measure of Tripartite Entanglement, Phys. Rev. Lett. \textbf{127}, 040403 (2021).

\bibitem{Guo22}Y. Guo, Y. Jia, X. Li, and L. Huang, Genuine multipartite entanglement measure, J. Phys. A: Math. Theor. \textbf{55}, 145303 (2022).


\bibitem{YJ}Y. Li and J. Shang, Geometric mean of bipartite concurrences as a genuine multipartite entanglement measure, Phys. Rev. Research. \textbf{4}, 023059 (2022).

\bibitem{Geoent} S. L. Hern\'andez, M. Enr\'iquez, and O. R. Ortiz, A geometric formulation to measure global and genuine entanglement in three-qubit systems. Sci Rep 14, 25684 (2024).

\bibitem{GV}G. Vidal, Entanglement monotones, J. Modern Opt. \textbf{47}, 355 (2000).

\bibitem{Huber}M. Huber, F. Mintert, A. Gabriel,  B. C. Hiesmayr, Detection of High-Dimensional Genuine Multipartite Entanglement of Mixed States, Phys. Rev. Lett. \textbf{104}, 210501 (2010).

\bibitem{Jungnitsch}B.Jungnitsch, T. Moroder, and  O.G\"uhne, Taming Multiparticle Entanglement,  Phys. Rev. Lett. \textbf{106}, 190502 (2011).


\bibitem{Qian18}X. F. Qian, M. A. Alonso, and J. H. Eberly,  Entanglement polygon inequality in qubit systems, New J. Phys. \textbf{20}, 063012 (2018).

\bibitem{Friis}N. Friis, G.Vitagliano, M. Malik, and M. Huber, Entanglement Certification - From Theory to Experiment, Nat. Rev. Phys. \textbf{1}, 72 (2019).


\bibitem{Eltschka}C. Eltschka and J. Siewert, Maximum N-body correlations do not in general imply genuine multipartite entanglement, Quantum \textbf{4}, 229 (2020).

\bibitem{Luo20}M. X. Luo, New Genuine Multipartite Entanglement, Advanced Quantum Technologies, \textbf{4}, 2000123 (2021).

\bibitem{Yang22}X. Yang, Y. H. Yang, and M. X, Luo, Entanglement polygon inequality in qudit systems, Phys. Rev. A \textbf{105}, 062402 (2022).

\bibitem{DLXD}D. D. Dong, L. J. Li, X. K. Song, and D. Wang, Quantifying genuine tripartite entanglement by reshaping the state, Phys. Rev. A \textbf{110}, 032420 (2024).

%\bibitem{Ren}L. H. Ren, Y.H. Shi, J.J. Chen, H. Fan, Multipartite entanglement detection based on the generalized state-dependent entropic uncertainty relation for multiple measurements, Phys. Rev. A \textbf{107} 052617 (2023).

%\bibitem{Xie24}S. Xie, D. Younis, Y. Mei, J.H. Eberly,  Multipartite Entanglement: A Journey through Geometry. Entropy \textbf{26}, 217 (2024).


\bibitem{GS}G. Svetlichny, Distinguishing three-body from two-body nonseparability by a Bell-type inequality, Phys. Rev. D. \textbf{35}, 3066 (1987).

\bibitem{DNSD}D. Collins, N. Gisin, S. Popescu, D. Roberts, and V. Scarani, Bell-type Inequalities to Detect True N-Body Nonseparability, Phys. Rev. Lett. \textbf{88}, 170405 (2002).

\bibitem{MG}M. Seevinck and G. Svetlichny, Bell-type Inequalities for Partial Separability in N-Particle Systems and Quantum Mechanical Violations, Phys. Rev. Lett. \textbf{89}, 060401 (2002).

\bibitem{JNNY}J. D. Bancal, N. Brunner, N. Gisin, and Y. C. Liang, Detecting Genuine Multipartite Quantum Nonlocality: A Simple Approach and Generalization to Arbitrary Dimensions, Phys. Rev. Lett. \textbf{106}, 020405 (2011).

\bibitem{FGH}F. Grasselli, G. Murta, H. Kampermann, and D. Bru{\ss}, Entropy Bounds for Multiparty Device-Independent Cryptography, PRX Quantum \textbf{2}, 010308 (2021).

\bibitem{FGHD}F. Grasselli, G. Murta, H Kampermann, and D. Bru{\ss}, Boosting device-independent cryptography with tripartite nonlocality, Quantum \textbf{7}, 980 (2023).

\bibitem{HRSR}H. Buhrman, R. Cleve, S. Massar, and R. de Wolf, Nonlocality and communication complexity, Rev. Mod. Phys. \textbf{82}, 665  (2010).

\bibitem{SASA}S. Pironio, A. Ac\'in, S. Massar, A. Boyer de La Giroday, $\textit{etal}$, Random numbers certified by Bell's theorem, Nature (London) \textbf{464}, 1021 (2010).

\bibitem{QSCC}Q. Chen, S. X. Yu, C. J. Zhang, C. H. Lai, and C. H. Oh, Test of Genuine Multipartite Nonlocality without Inequalities, Phys. Rev. Lett. \textbf{112}, 140404 (2014).

\bibitem{MAIT}M. Pandit, A. Barasi\'nski, I. M\'arton, T. V\'ertesi, and W. Laskowski, Optimal tests of genuine multipartite nonlocality, New J. Phys. 24, 123017 (2022).

\bibitem{QM}Q. Y. He and M. D. Reid, Genuine Multipartite Einstein-Podolsky-Rosen Steering, Phys. Rev. Lett. \textbf{111}, 250403 (2013).

\bibitem{M}M. D. Reid, Signifying quantum benchmarks for qubit teleportation and secure quantum communication using Einstein-Podolsky-Rosen steering inequalities, Phys. Rev. A. \textbf{88}, 062338 (2013).

\bibitem{IYQG}I. Kogias, Y. Xiang, Q. Y. He, and G. Adesso, Unconditional security of entanglement-based continuous-variable quantum secret sharing, Phys. Rev. A. \textbf{95}, 012315 (2017).

\bibitem{KXY}K. Sun, X. J. Ye, Y. Xiao, $\textit{et al.}$, Demonstration of Einstein-Podolsky-Rosen steering with enhanced subchannel discrimination, \emph{npj Quantum Inf}. \textbf{4}, 12 (2018).

\bibitem{DPG}D. Cavalcanti, P. Skrzypczyk, G. Aguilar, $\textit{et al.}$, Detection of entanglement in asymmetric quantum networks and multipartite quantum steering, \emph{Nat Commun}. \textbf{6}, 7941 (2015).

\bibitem{KXJ}K. Sun, X. J. Ye, J. S. Xu, $\textit{et al.}$, Experimental Quantification of Asymmetric Einstein-Podolsky-Rosen Steering, Phys. Rev. Lett. \textbf{116}, 160404 (2016).

\bibitem{JFMN}J. Bowles, F. Hirsch, M. T. Quintino, and N. Brunner, Sufficient criterion for guaranteeing that a two-qubit state is unsteerable, Phys. Rev. A. \textbf{93}, 022121 (2016).

\bibitem{LSL}L. Lai and S. L. Luo, Detecting Einstein-Podolsky-Rosen steering via correlation matrices, Phys. Rev. A. \textbf{106}, 042402  (2022).

\bibitem{ZRS}Z. A. Jia, R. Zhai, S. Yu, Y. C. Wu, and G. C. Guo, Hierarchy of Genuine Multipartite Quantum Correlations, \emph{Quantum. Inf. Process}. \textbf{19}, 419 (2020).


%\bibitem{LS}S. Lloyd., Quantum coherence in biological systems,J. Phys. Conf. Ser. \textbf{302}, 012037(2011).

%\bibitem{CNY}C. M. Li, N. Lambert, Y. N. Chen, G. Y. Chen and F. J. Nori,  Witnessing Quantum Coherence: from solid-state to biological systems, Sci. Rep. \textbf{2}, 885(2012).

%\bibitem{NL}N. Lambert, \textit{et al}. Quantum biology, Nat. Phys. \textbf{9}, 10(2013).




\bibitem{VSL}V. Giovannetti, S. Lloyd, and L. Maccone, Quantum-Enhanced Measurements: Beating the Standard Quantum Limit, Science \textbf{306}, 1330 (2004).

\bibitem{BRL}B. M. Escher, R. L. de Matos Filho, and L. Davidovich, General framework for estimating the ultimate precision limit in noisy quantum-enhanced metrology, Nat. Phys. \textbf{7}, 406 (2011).

\bibitem{IRWS}I. Marvian and R. W. Spekkens, How to quantify coherence: Distinguishing speakable and unspeakable notions, Phys. Rev. A \textbf{94}, 052324 (2016).



%\bibitem{MJ}M. Horodecki and J. Oppenheim, Fundamental limitations for quantum and nanoscale thermodynamics, Nat. Commun. \textbf{4}, 2059(2013).

%\bibitem{JA}J. Aberg,  Catalytic Coherence, Phys. Rev. Lett. \textbf{113}, 150402 (2014).

%\bibitem{PMJ}P. $\acute{C}$wikli$\acute{n}$ski, M. Studzi$\acute{n}$ski, M. Horodecki and J. Oppenheim, Limitations on the evolution of quantum coherences: towards fully quantum second laws of thermodynamics, Phys. Rev. Lett. \textbf{115}, 210403 (2015).

\bibitem{PW}P. W. Shor, Scheme for reducing decoherence in quantum computer memory, Phys. Rev. A. \textbf{52}, R2493 (1995).

\bibitem{MH}M. Hillery, Coherence as a resource in decision problems: The Deutsch-Jozsa algorithm and a variation, Phys. Rev. A. \textbf{93}, 012111 (2016).

\bibitem{JBD}J. J. Ma, B. Yadin, D. Girolami, V. Vedral, and M. Gu, Converting Coherence to Quantum Correlations, Phys. Rev. Lett. \textbf{116}, 160407 (2016).

\bibitem{JDN}J. M. Matera, D. Egloff, N. Killoran, and M. B. Plenio, Coherent control of quantum systems as a resource theory, Quantum Science and Technology. \textbf{1}, 01LT01 (2016).

%\bibitem{SDT}S. D. Bartlett, T. Rudolph and R. W. Spekkens, Reference frames, superselection rules, and quantum information, Rev. Mod. Phys. \textbf{79}, 555(2007).

%\bibitem{IR}I. Marvian and R. W. Spekkens, The theory of manipulations of pure state asymmetry: I. Basic tools, equivalence classes and single copy transformations, New J. Phys. \textbf{15}, 033001 (2013).

%\bibitem{IRW}I. Marvian and R. W. Spekkens, Modes of asymmetry: The application of harmonic analysis to symmetric quantum dynamics and quantum reference frames, Phys. Rev. A \textbf{90}, 062110 (2014).

\bibitem{SD}S. Du, Z. Bai, and X. Qi, Coherence measures and optimal conversion for coherent states, Quantum Info. Comput. \textbf{15}, 1307 (2015), arXiv: 1504. 02862.

\bibitem{XianShi} X. Shi, Coherence quantifier based on the quantum optimal transport cost, Phys. Rev. A. \textbf{109}, 052443 (2024).

\bibitem{Speedlim} Z. Y. Mai and C. S. Yu, Quantum speed limit in terms of coherence variations, Phys. Rev. A. \textbf{110}, 042425 (2024).  

\bibitem{BCP}T. Baumgratz, M. Cramer, and M. B. Plenio, Quantifying Coherence, Phys.Rev. Lett. \textbf{113}, 140401 (2014).

\bibitem{SAP17}A. Streltsov, G. Adesso, and M. B. Plenio, Colloquium: Quantum coherence as a resource, Rev. Mod. Phys.  \textbf{89}, 041003 (2017).

\bibitem{HU}M. L. Hu, X. Hu, J. C. Wang, Y. Peng, Y. R. Zhang, and H. Fan, Quantum coherence and geometric quantum discord, 	Phys. Rep. \textbf{762}, 1 (2018).


\bibitem{XHZ}X. Yuan, H. Zhou, Z. Cao, and X. Ma, Intrinsic randomness as a measure of quantum coherence, Phys. Rev. A. \textbf{92}, 022124  (2015).

\bibitem{ADY}A. Winter and D. Yang, Operational Resource Theory of Coherence, Phys. Rev. Lett. \textbf{116}, 120404 (2016).

%\bibitem{TMA}T. Biswas, M. Garc$\acute{i}$a D$\acute{i}$az and A. Winter, Interferometric visibility and coherence, Proc. R. Soc. A. \textbf{473}, 20170170(2017).

\bibitem{CTM}C. Napoli, T. R. Bromley, M. Cianciaruso, M. Piani, N. Johnston, and G. Adesso,  Robustness of Coherence: An Operational and Observable Measure of Quantum Coherence, Phys. Rev. Lett. \textbf{116}, 150502 (2016).

%\bibitem{MMTC}M. Piani, M. Cianciaruso, T. R. Bromley, C. Napoli, N. Johnston, and G. Adesso, Robustness of asymmetry and coherence of quantum states, Phys. Rev. A. \textbf{93}, 042107(2016).


\bibitem{JJH}J. K. Asb\'oth, J. Calsamiglia and H. Ritsch, Computable Measure of Nonclassicality for Light, Phys. Rev. Lett. \textbf{94}, 173602 (2005).

\bibitem{WJ}W. Vogel and J. Sperling, Unified quantification of nonclassicality and entanglement, Phys. Rev. A. \textbf{89}, 052302 (2014).

\bibitem{AUH}A. Streltsov, U. Singh, H. S. Dhar, M. N. Bera, and G. Adesso, Measuring Quantum Coherence with Entanglement, Phys. Rev. Lett. \textbf{115}, 020403 (2015).

\bibitem{SC}S. Cheng and M. J. W. Hall, Complementarity relations for quantum coherence, Phys. Rev. A. \textbf{92}, 042101 (2015).

\bibitem{NFM}N. Killoran, F. E. S. Steinhoff and M. B. Plenio, Converting Nonclassicality into Entanglement, Phys. Rev. Lett. \textbf{116}, 080402 (2016).

\bibitem{XTF}X. Qi, T. Gao, and F. Yan, Measuring coherence with entanglement concurrence, J. Phys. A: Math. Theor. \textbf{50}, 285301 (2017).

\bibitem{Kim}H. J. Kim and S. Lee, Relation between Quantum Coherence and Quantum Entanglement in Quantum Measurements, Phys. Rev. A.  \textbf{106}, 022401 (2022).

\bibitem{HZ}H. J. Zhu, Z. H. Ma, Z. Cao, S. M. Fei, and V. Vedral, Operational one-to-one mapping between coherence and entanglement measures, Phys. Rev. A. \textbf{96}, 032316 (2017).


\bibitem{Regula} B. Regula , M. Piani, M. Cianciaruso, T. R. Bromley, A. Streltsov, and G. Adesso, Converting multilevel nonclassicality into genuine multipartite entanglement, New J. Phys. \textbf{20}, 033012 (2018).

\bibitem{XIYA}Y. Xi, T. G. Zhang, Z. J. Zheng, X. Q. Li-Jost and S. M. Fei, Converting quantum coherence to genuine multipartite entanglement and nonlocality, Phys. Rev. A.  \textbf{100}, 022310 (2019).


\bibitem{TANKC} K. C. Tan, S. Choi, H. Kwon, and H. Jeong, Coherence, Quantum Fisher Information, Superradiance and Entanglement are
Interconvertible Resources, Phys. Rev. A.  \textbf{97}, 052304 (2018).

\bibitem{Mukhopadhyay}C. Mukhopadhyay, S. Sazim, and A. K. Pati, Coherence makes
quantum systems magical, J. Phys. A. \textbf{51}, 414006 (2018).

\bibitem{SMH}S. M. Hashemi Rafsanjani, M. Huber, C. J. Broadbent, and J. H. Eberly, Genuinely multipartite concurrence of $N$- qubit X matrices, Phys. Rev. A. \textbf{86}, 062303 (2012).


\end{thebibliography}
\end{document}